\documentclass[conference,letterpaper]{IEEEtran}

\pdfoutput=1

\usepackage[utf8]{inputenc} 
\usepackage[T1]{fontenc}
\usepackage{url}
\usepackage{ifthen}
\usepackage{cite}
\usepackage[cmex10]{amsmath} %

\interdisplaylinepenalty=2500 %

\hyphenation{op-tical net-works semi-conduc-tor}

\usepackage[ruled]{algorithm}
\usepackage{colortbl}
\usepackage{booktabs}
\usepackage{multirow}
\usepackage{caption}
\usepackage{float}
\usepackage{multicol}
\usepackage[normalem]{ulem}
\usepackage{algpseudocode}
\usepackage{tikz-layers}
\algnewcommand{\Initialize}{%
  \State \textbf{Initialize:}
}
\algnewcommand{\Output}{%
  \State \textbf{Output:}
}
\usepackage{mathtools}
\usepackage{graphicx}
\usepackage{url}

\usepackage[
    shortcuts,       %
    acronym,         %
    section=section, %
]{glossaries}
\usepackage{tikz}
\usetikzlibrary{backgrounds,calc,shadings,shapes.arrows,shapes.symbols,shadows,fit,positioning,topaths,hobby}
\usepackage{float}
\usepackage{amssymb}
\usepackage{amsmath}
\usepackage{amsthm}
\usepackage{xspace}
\usepackage[capitalize]{cleveref}
\usepackage{subfig}
\usepackage[inline]{enumitem}
\usepackage{cite}
\usepackage{siunitx}
\usepackage{dsfont}
\usepackage{balance}
\makeglossaries

\newtheorem{theorem}{Theorem}

\newtheorem{lemma}{Lemma}
\newtheorem{observation}{Observation}
\newtheorem{proposition}{Proposition}
\newtheorem{definition}{Definition}

\usepackage{pgfplots}
\pgfplotsset{compat=newest}

\IEEEoverridecommandlockouts

\hyphenation{op-tical net-works semi-conduc-tor}

\newcommand{\define}{\ensuremath{:=}}

\DeclareMathOperator*{\argmin}{arg\,min}

\newcommand{\symtx}{\ensuremath{\mathcal{X}}}
\newcommand{\symrx}{\ensuremath{\mathcal{Y}}}

\newcommand{\channel}{\ensuremath{W}}

\newcommand{\card}[1]{\ensuremath{\vert #1 \vert}}

\newcommand{\mutinf}[1]{\mathrm{I}\left(#1\right)}
\newcommand{\entropy}[1]{\mathrm{H}\left(#1\right)}
\newcommand{\kl}[2]{\ensuremath{\mathrm{D}\left(#1 \|#2\right)}}

\newcommand{\chisq}[2]{\ensuremath{\chi^2\left(#1,#2\right)}}

\newcommand{\nearestcenter}[1][x]{\ensuremath{r(#1)}}
\newcommand{\ncenters}{\ensuremath{k}}
\newcommand{\centers}{\ensuremath{\mathcal{R}}}
\newcommand{\pmf}{\ensuremath{W_{\mathrm{Y}|\mathrm{X}}}}
\newcommand{\pmfcondx}[1][x]{\ensuremath{W_{\mathrm{Y}|\mathrm{X}=#1}}}
\newcommand{\ycondx}[1][x]{\ensuremath{W_{\mathrm{Y}|\mathrm{X}=#1}(y)}}

\newcommand{\fullchannel}{\ensuremath{W}}
\newcommand{\prunedchannel}{\ensuremath{W_\centers}}

\newcommand{\E}{\ensuremath{\mathbb{E}}}
\newcommand{\var}{\ensuremath{\text{Var}}}

\newcommand{\extremalpoints}{\ensuremath{\mathcal{E}}}
\newcommand{\redidx}{\ensuremath{u}}
\newcommand{\redsym}{\ensuremath{\mathcal{U}}}
\newcommand{\spanningidx}{\ensuremath{r}}
\newcommand{\spanningsym}{\ensuremath{\mathcal{R}}}
\newcommand{\coeff}{\ensuremath{c_{\redidx, \spanningidx}}}
\newcommand{\allsym}{\ensuremath{\mathcal{X}}}
\newcommand{\allidx}{\ensuremath{x}}

\newcommand{\lbp}[1][x]{\ensuremath{\kappa(#1)}}
\newcommand{\tradeconsta}{\ensuremath{k_1}}
\newcommand{\tradeconstb}{\ensuremath{k_2}}
\newcommand{\pseudocaod}{\ensuremath{Q_{\rvY, \eta}^\star}}
\newcommand{\hulldev}[2]{\ensuremath{\varepsilon_{#1}(#2)}}
\newcommand{\inconvexhullsym}{\ensuremath{\mathcal{I}}}
\newcommand{\inconvexhullidx}{\ensuremath{i}}
\newcommand{\notinconvexhullsym}{\ensuremath{\mathcal{N}}}
\newcommand{\notinconvexhullidx}{\ensuremath{n}}
\newcommand{\closestconditional}[1][\notinconvexhullidx]{W_{Y, #1}^\star}
\newcommand{\closestconditionalvar}[1][\notinconvexhullidx]{W_{Y}}
\DeclareMathOperator{\conv}{Conv}
\newcommand{\hull}[1]{\conv\left(#1 \right)}

\newcommand{\jsd}[2]{\ensuremath{\text{JSD}(#1\Vert #2)}}

\newcommand{\tmprep}[1][x^\star]{\ensuremath{r(#1)}}

\newcommand{\rvZ}{\ensuremath{\mathrm{Z}}}
\newcommand{\tmpsetfull}{\ensuremath{\mathcal{L}}}
\newcommand{\tmpseta}{\ensuremath{\mathcal{J}}}
\newcommand{\tmpsetb}{\ensuremath{\mathcal{K}}}
\newcommand{\tmpsetitem}{\ensuremath{\ell}}
\newcommand{\pruningsym}{\ensuremath{\mathcal{U}}}

\newcommand{\chisqtmp}{\ensuremath{\alpha}}
\newcommand{\nclustersdir}{\ensuremath{\nu}}
\newcommand{\dirscalea}{\ensuremath{d_1}}
\newcommand{\dirscaleb}{\ensuremath{d_2}}
\newcommand{\pmfcondxpruned}[1][x]{\ensuremath{W^\prime_{\mathrm{Y}|\mathrm{X}=#1}}}

\newcommand{\inputdist}{\ensuremath{P_\mathrm{X}}}
\newcommand{\rvY}{\ensuremath{\mathrm{Y}}}
\newcommand{\rvX}{\ensuremath{\mathrm{X}}}
\newcommand{\cluster}{\ensuremath{\mathcal{H}}}
\newcommand{\linkage}[2]{\ensuremath{\text{Link}(#1,#2)}}

\begin{document}
\title{Capacity-Maximizing Input Symbol Selection for Discrete Memoryless Channels}

\author{%
    \IEEEauthorblockN{Maximilian Egger, %
                    Rawad Bitar, %
                    Antonia Wachter-Zeh, %
                    Deniz Gündüz and %
                    Nir Weinberger%
                    } %
    \thanks{M.E., R.B. and A.W-Z. are with the Technical University of Munich. Emails: \{maximilian.egger, rawad.bitar, antonia.wachter-zeh\}@tum.de. D.G. is with Imperial College London. Email: d.gunduz@imperial.ac.uk. N.W. is at Technion --- Israel Institute of Technology. Email: nirwein@technion.ac.il.}
    \thanks{This project has received funding from the German Research Foundation (DFG) under Grant Agreement Nos. BI 2492/1-1 and WA 3907/7-1, and from UKRI for project AI-R (ERC-Consolidator Grant, EP/X030806/1). The work of N.W. was partly supported by the Israel Science Foundation (ISF), grant no. 1782/22.}
}

\maketitle

\begin{abstract}
    Motivated by communication systems with constrained complexity, we consider the problem of input symbol selection for discrete memoryless channels (DMCs). Given a DMC, the goal is to find a subset of its input alphabet, so that the optimal input distribution that is only supported on these symbols maximizes the capacity among all other subsets of the same size (or smaller). We observe that the resulting optimization problem is non-concave and non-submodular, and so generic methods for such cases do not have theoretical guarantees. We derive an analytical upper bound on the capacity loss when selecting a subset of input symbols based only on the properties of the transition matrix of the channel. We propose a selection algorithm that is based on input-symbols clustering, and an appropriate choice of representatives for each cluster, which uses the theoretical bound as a surrogate objective function. We provide numerical experiments to support the findings.%
\end{abstract}
\section{Introduction}

We study the long-standing problem of reducing the input alphabet size of a Discrete Memoryless Channel (DMC) with input alphabet $\mathcal{X}$ to a set of $k < |\mathcal{X}|$ symbols (and possibly $k \ll |\mathcal{X}|$),  which are carefully selected to maximize the capacity of the resulting channel. %
A natural motivation for this problem is that an input alphabet of controlled cardinality allows to control the complexity of the transmitter and receiver. Furthermore, when the channel transition probability function is unknown, the restriction to a subset of the input symbols may reduce the cost of estimating the effective transition probability function. This possibility is outlined in, e.g. \cite{egger2023maximal}, where the goal was to identify the maximal-capacity channel among a set of candidate channels, through adaptive exploration. If it can be determined during the exploration phase that capacity can be achieved without using some of the input symbols (while not knowing the capacity exactly at this stage), then this reduces the cost of accurately estimating the capacity during the rest of the exploration phase.

The problem of input selection has been studied for the special case of conditionally Gaussian channels in \cite{chan2005capacity}, and in the context of Multiple-Input Multiple-Output (MIMO) channels, e.g., \cite{konar2018simple}. In the latter, the authors show submodularity for the problem of antenna subset selection for MIMO. This is useful, since the submodularity property leads to theoretical guarantees on the capacity achieved by greedy algorithms. Nonetheless, as we show, an analogous submodularity property does not hold for the inputs of DMCs, and so does not lead to direct performance guarantees on greedy algorithms. In \cite{wesel2018efficient} the binomial channel was considered, whose input alphabet is the continuous interval $[0,1]$, and an efficient algorithm for finding the finitely supported capacity-achieving input distribution was proposed (called Dynamic Blahut-Arimoto). The algorithm was recently generalized to the multinomial channel in \cite{kobovich2023mdab}). 

The papers that consider DMCs, and hence, that are closest to ours, are \cite{mezghani2008achieving} and \cite{schmeink2011capacity}. The authors of \cite{mezghani2008achieving} stress that among different formulations of the problem, the standard formulation of capacity through the maximization of the mutual information is the most interesting from an information-theoretic perspective, but conclude that it is challenging to efficiently solve this formulation. Hence, they instead focus on optimizing the symbol error rate or the cut-off rate. The papers \cite{schmeink2010reducing,schmeink2011capacity} considered capacity, but also argued that achieving theoretical guarantees is  challenging, and so focused on numerical approaches based on the Blahut-Arimoto algorithm. 

In this work, we revisit the problem of selecting input symbols for capacity maximization, and take a principled intermediate approach between generic greedy optimization methods, and high-complexity exhaustive-search optimal methods. Based on the properties of the transition matrix of a DMC, we first derive bounds on the loss in capacity incurred by only using a selected subset of the input symbols for transmission. We then use this bound as a surrogate measure in designing an algorithm for input symbol selection, and show the effectiveness of the proposed algorithm in various scenarios. Interestingly, our algorithm operates without computing the original channel's capacity (with full use of input symbols). This is useful in case the input alphabet is very large and accurate computation of the capacity is computationally demanding.  

Informally, our algorithm is based on clustering of similar rows of the channel transition matrix. The novelty is in the choice of cluster representatives: Our upper bound depends on the subset of the output alphabet's probability simplex covered by the transition probabilities of the selected input symbols. Thus, the algorithm chooses the representatives to maximize this subset, and thus, to reduce the loss in capacity compared to the full usage of the input symbols. 
Such clustering points in the probability simplex have been studied, e.g., in \cite{nielsen2017clustering}, but our algorithm is tailored to maximize the mutual information, and hence differs from general-purpose choices.

\section{Problem Formulation}
\paragraph*{Notation}
Random variables are denoted by capital script letters $\rvZ$, their realizations by small letters, and sets by calligraphic letters $\mathcal{Z}$. The entropy of random variable $\rvZ$ over alphabet $\mathcal{Z}$ is denoted by $\entropy{\rvZ} \define -\sum_{z \in \mathcal{Z}} \Pr(\rvZ=z) \log(\Pr(\rvZ=z))$. All logarithms are taken to the natural base unless stated otherwise. The probability simplex over the alphabet $\mathcal{Z}$ is denoted by $\mathcal{P}(\mathcal{Z})$. The KL-divergence between distributions $P$ and $Q$ is denoted by $\kl{P}{Q}$, the $\chi^2$-divergence by $\chisq{P}{Q}$, and the Jensen-Shannon divergence by $\jsd{P}{Q} \define \frac{1}{2} \kl{P}{M} + \frac{1}{2} \kl{Q}{M},$ where $M = \frac{P + Q}{2}$. For an integer $\tau$,  $[\tau] \define \{1, \dots, \tau\}$. 

We consider a DMC $\channel$ with input alphabet $\symtx$ and output alphabet $\symrx$ to be an indexed set of its conditional probability mass functions $W=\{\pmfcondx\}_{x \in \symtx}$, or, alternatively, the transition matrix $\pmf$. When we use only a subset $\centers$ of the input symbols, we conveniently refer to the channel as $W_\centers \define \{\pmfcondx\}_{x \in \centers}$, for some $\centers \subset \symtx$. The mutual information between the input distribution $\inputdist$ and the output distribution $P_Y$ induced by the channel $\pmf$ is denoted by $\mutinf{\inputdist; \pmf}$.
The capacity of channel $\pmf$ can be written as the maximization of the mutual information over $\inputdist$, i.e.,
\begin{align*}
    C(\fullchannel) = \max_{\inputdist \in P(\symtx)} \mutinf{\inputdist; \pmf}.
\end{align*}
While the optimizer $\inputdist^\star$ to this optimization problem is not unique, any optimizer induces the unique capacity-achieving output distribution $Q_\rvY^\star$ \cite[Corollary 2, Thm 4.5.1]{gallager1968information}. Throughout the paper, we additionally make use of the dual problem of this optimization problem, also known as the minimax capacity theorem \cite{csiszar1972class,kemperman1974shannon}, which states that 
\begin{equation}
\mathrm{C}(\fullchannel)=\min_{Q_\rvY \in {\cal P}({\cal Y})}\max_{x\in{\cal X}}\kl{\pmfcondx}{ Q_\rvY},\label{eq: Csiszar minimax capacity}
\end{equation}
whose minimizer is the unique capacity-achieving output distribution $Q_\rvY^\star$. Furthermore, by the convexity of the optimization problem, the KL-divergence $\kl{\pmfcondx}{Q_\rvY^\star}$ for all input symbols $x \in \symtx$ equals $C(\fullchannel)$ if $\inputdist^\star(x)>0$ and is at most $C(\fullchannel)$ if $\inputdist^\star(x)=0$ \cite[Thm 4.5.1]{gallager1968information}. Hence, the capacity is the \textit{information radius} of the collection of conditional distributions.

We focus on the following questions: \textit{How to select a subset $\centers^* \subset \symtx$ of input symbols, such that $\inputdist^\star$ is supported on $\centers^*$, $|\centers^*| \leq k$ and the capacity is maximized among all other choice of $\centers$, $|\centers| \leq k$?  How to quantify the capacity loss compared to the channel that uses all symbols in $\symtx$, i.e., $C(W) - C(W_{\centers^*})$?} In order to demonstrate the difficulty of the problem, we next present two approaches that are commonly used to lower the complexity of such optimization problems, yet we show that both fail to achieve that (at least not in the direct manner that we have considered). 

First, we may consider a relaxation of the constraint. We note that the constraint in the primal formulation by limiting the input distribution $\inputdist$ to a support of size $\ncenters$, i.e.,
\begin{align*}
    C(\fullchannel) = \max_{\inputdist \in P(\symtx): \Vert \inputdist \Vert_0 = \ncenters} \mutinf{\inputdist; \pmf},
\end{align*}
where $\Vert x \Vert_0 \define \lim_{p\rightarrow 0} \sum_{i} \vert x \vert^p$ is known as the $L_0$-pseudonorm. This formulation limits the support of $\inputdist$ to $\ncenters$, but is non-concave and NP-hard in general \cite{feng2013complementarity}. A common approach is to relax the $L_0$ constraint to an $L_1$ constraint (or higher order). However, even such relaxed constraint still results in a non-concave optimization problem.

Second, we may consider showing that this problem is submodular, since this facilitates various optimization tools with theoretical guarantees  \cite{bach2013learning, bilmes2022submodularity}, and specifically, guarantees on the loss of greedy algorithms. To that end, recall the following definition of a submodular set function:
\begin{definition}[Submodular Set Functions] \label{def:submodularity}
    Consider a set $\tmpsetfull$, and let  $2^{\tmpsetfull}$ be its power set. A function $f: 2^{\tmpsetfull} \rightarrow \mathbb{R}$ is submodular if and only if for every $\tmpseta \subset \tmpsetb \subset \tmpsetfull$ and an element $\tmpsetitem \in \tmpsetfull \setminus \{\tmpsetb\}$, it  holds that
    \begin{align*}
        f(\tmpseta \cup \{\tmpsetitem\}) - f(\tmpseta) \geq f(\tmpsetb \cup \{\tmpsetitem\}) - f(\tmpsetb).
    \end{align*}
\end{definition}
This version of the definition of submodularity is based on the \emph{diminishing returns} property. Informally, adding an element to a small set $\tmpseta$ increases the value function by at least as much as adding the same element to a larger superset $\tmpsetb$. 
However, we have the following:
\begin{observation}
    The capacity of a DMC is \textit{not} submodular in the set of input symbols.
\end{observation}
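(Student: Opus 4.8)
The plan is to refute submodularity by exhibiting a single small channel for which the diminishing-returns inequality of Definition~\ref{def:submodularity} fails, taking the set function to be $f(\mathcal{R}) \define C(\channel_{\mathcal{R}})$ for $\mathcal{R} \subseteq \symtx$ (with the convention $C(\channel_\emptyset) = 0$). The guiding intuition is that an input symbol can be nearly useless alongside a \emph{similar} symbol but highly valuable alongside a \emph{complementary} one, and a superadditive jump of that type is precisely what submodularity forbids.

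Concretely, I would take the channel $\channel$ with $\symrx = \{0,1\}$, $\symtx = \{1,2,3\}$ whose rows are $\pmfcondx[1] = (\tfrac12,\tfrac12)$, $\pmfcondx[2] = (1,0)$, $\pmfcondx[3] = (0,1)$, and set $\mathcal{J} = \{1\}$, $\mathcal{K} = \{1,3\}$, $\ell = 2$, so that $\mathcal{J} \subset \mathcal{K}$ and $\ell \notin \mathcal{K}$. Four quantities then need to be pinned down: (i) $C(\channel_{\{1\}}) = 0$, since a single conditional distribution carries no information; (ii) $C(\channel_{\{1,2,3\}}) = \log 2$, because inputs $2$ and $3$ already form a noiseless binary channel while $|\symrx| = 2$ caps the capacity; (iii) $C(\channel_{\{1,2\}}) = C(\channel_{\{1,3\}})$, via the output-relabeling symmetry $0 \leftrightarrow 1$ that carries one sub-channel onto the other; and (iv) the upper bound $C(\channel_{\{1,2\}}) \le \log\tfrac43$. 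For (iv) I would use the minimax identity~\eqref{eq: Csiszar minimax capacity} with the test output distribution $Q_\rvY = (\tfrac34,\tfrac14)$, for which $\max\{\kl{\pmfcondx[1]}{Q_\rvY},\ \kl{\pmfcondx[2]}{Q_\rvY}\} = \kl{(1,0)}{(\tfrac34,\tfrac14)} = \log\tfrac43$; alternatively one may invoke the closed-form capacity $\log\tfrac54$ of the $Z$-channel with crossover $1/2$.

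Assembling the pieces, the marginal gain of $\ell$ over $\mathcal{J}$ is $C(\channel_{\{1,2\}}) - C(\channel_{\{1\}}) = C(\channel_{\{1,2\}}) \le \log\tfrac43$, whereas the marginal gain of $\ell$ over $\mathcal{K}$ is $C(\channel_{\{1,2,3\}}) - C(\channel_{\{1,3\}}) = \log 2 - C(\channel_{\{1,2\}}) \ge \log 2 - \log\tfrac43 = \log\tfrac32$. Since $\log\tfrac32 > \log\tfrac43$, the gain for the larger set strictly exceeds the gain for the smaller one, which is exactly the negation of Definition~\ref{def:submodularity}; hence $f$ is not submodular.

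All computations here are elementary, so the only step that needs real care is (iv): the bound on $C(\channel_{\{1,2\}})$ must be strong enough that $\log 2 - C(\channel_{\{1,2\}})$ provably exceeds $C(\channel_{\{1,2\}})$, i.e.\ $C(\channel_{\{1,2\}}) < \tfrac12 \log 2$. The reference $(\tfrac34,\tfrac14)$ achieves this with margin ($\log\tfrac43 < \log\sqrt2$), whereas a reference with a zero coordinate would give a vacuous $\infty$ bound, so keeping full support is essential. It is also worth noting explicitly that $f$ is monotone in $\mathcal{R}$, so the violation is genuinely of the diminishing-returns type and not a non-monotonicity artifact, and that the strict slack between $\log\tfrac32$ and $\log\tfrac43$ means the example is robust to small perturbations of the rows, so the phenomenon is not knife-edge.
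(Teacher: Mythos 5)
Your proof is correct: all four capacity values are pinned down rigorously (the dual bound $C(\channel_{\{1,2\}}) \le \max_x \kl{\pmfcondx}{Q_\rvY} = \log\tfrac43$ with $Q_\rvY=(\tfrac34,\tfrac14)$ is valid and indeed dominates the true value $\log\tfrac54$, and $\log\tfrac43 < \tfrac12\log 2$ gives the strict violation), so the example genuinely refutes Definition~\ref{def:submodularity}. The underlying intuition — a symbol that is nearly redundant next to a similar symbol but highly valuable next to its complement — is exactly the one the paper articulates, but your concrete construction differs from both of the paper's counterexamples. The paper first disposes of the degenerate case $\mathcal{J}=\emptyset$, then exhibits a $4\times 4$ channel with equal conditional entropies $\entropy{\pmfcondx}$ so that capacity reduces to output-entropy maximization, and asserts the violated inequality for $\mathcal{J}=\{1,2\}$, $\mathcal{K}=\{1,2,4\}$ — a check that is effectively numerical. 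Your $3\times 2$ example is smaller, and every quantity is either exact ($0$ and $\log 2$) or bounded analytically via the minimax identity~\eqref{eq: Csiszar minimax capacity}, so it is fully verifiable by hand; that is a real advantage. The one respect in which it is weaker as a demonstration is that your smaller set satisfies $C(\channel_{\mathcal{J}})=0$ (a singleton carries no information), so the violation still leans on the same zero-capacity phenomenon that drives the trivial empty-set counterexample, just shifted up by one; the paper's $4$-symbol example exhibits the failure with both $\mathcal{J}$ and $\mathcal{K}$ having strictly positive capacity. Since the observation only requires one counterexample, either suffices, and your closing remarks on monotonicity and robustness to perturbation are sensible additions.
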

We show this observation %
via  two counterexamples. First, trivially, the diminishing return property breaks down when $\tmpseta=\emptyset$, since the capacity can never increase by adding any symbol to an empty set. Second, assume that the conditional entropies $\entropy{\pmfcondx}$ are equal for all $x \in \symtx$ and, hence, capacity is achieved by maximizing $\entropy{\rvY}$ through balancing the output distribution. Indeed, suppose that there are $4$ input symbols, such that the first and second symbols (resp. third and fourth) are \textit{complementary}, in the sense that a uniform mixture of their corresponding rows is a uniform distribution in $\mathcal{P}(\symrx)$ (or some other high-entropy distribution).  In this case, given the first two symbols, adding the third symbol will "unbalance" the output distribution and reduce its entropy, and thus the mutual information, while adding the fourth symbol will "re-balance" it, and again, increase that entropy. Evidently, this contradicts the submodularity condition.

Consequently, submodular properties and guarantees for greedy optimization algorithms cannot be directly exploited for the problem of channel input symbol selection. %
We note in passing that the capacity of a DMC may still fulfill \textit{approximate} notions of submodularity \cite{feige2017approximate, abhimanyu2018approximate, chierichetti2022additive},  but we leave an investigation of this possibility for further research. %

\section{Theoretical Guarantees on Input Symbol Selection}
In this section, we present our main theoretical guarantees on the selection of input symbols $\centers$ for DMCs that maximize the capacity of the resulting channel. %
When exploring the importance of input symbols for maximizing the capacity of a channel, it is natural to examine the interplay between the conditional distributions given by the rows of the channel's transition matrix. We will concentrate our analysis on the convex hull spanned by a certain subset of the symbols' conditional distributions, i.e., the transition matrix generated by $\centers$, and its relation to unused symbols' distributions. Depending on the shape of the channel, such symbols $\symtx \setminus \centers$ can potentially be pruned without or with only a minor loss in the capacity of the channel. We next formally define a convex hull of a channel based on a subset of the symbols $\centers \subset \symtx$:
\begin{definition}[Convex Hull of a Channel]
    Let $\centers$ be a set of symbols that form a channel $W_\centers \define \{\pmfcondx[r]\}_{r \in \centers}$. The convex hull of the channel is defined as %
    \begin{align*}
        \hull{W_\centers} \define \left\{\sum_{r \in \centers} c_r \pmfcondx[r]\right\}_{\mathbf{c} \in \mathcal{P}(\mathcal{R})} \subseteq \mathcal{P}(\symrx) . %
    \end{align*}
\end{definition}

Having this definition at hand, we start with the special case of selecting input symbols when the input alphabet $\symtx$ is large compared to the output alphabet $\symrx$. We later move to the cases in which the alphabet sizes are of the same order, and investigate when input selection can be done without a loss in capacity, and how to bound the loss in capacity otherwise.

\subsection{Symbol Selection Without Capacity Loss}
When the input alphabet of a DMC is large compared to its output alphabet, the convex hull of the channel may be the entire output simplex. Indeed, it is well known that, due to Carathéodory's theorem, the capacity-achieving output distribution $Q_\rvY^\star \in \hull{W}$ can be written as a convex combination of at most $\card{\symrx}$ extreme points (conditional distributions) corresponding to inputs  $\extremalpoints \subset \symtx$. Thus, independently of the transition matrix, there exists a set of input symbols $\extremalpoints$ with cardinality at most $\card{\symrx}$, which achieves the capacity \cite[Corollary 3, Thm 4.5.1]{gallager1968information}. 

Even if the number of input symbols $\card{\symtx}$ of a DMC is at most the size of the output alphabet $\card{\symrx}$, we can potentially utilize the properties of the convex hull of a channel to prune some of the input symbols without losing in capacity. It can be found that this applies when the conditional distribution of a symbol $x$ lies within the convex hull of the channel spanned by a subset of the symbols $\centers$. In general, symbol $x \in \symtx$ can be removed from $\symtx$ without loss in capacity when $\pmfcondx[\allidx]$ is a convex combination of the channels of the remaining symbols. We formalize and generalize this statement as follows:

\begin{proposition} \label{thm:inconvexhull}
    Consider a channel $W = \{\pmfcondx[\allidx]\}_{\allidx \in \allsym}$, where the input symbols $\allsym$ are partitioned into two disjoint sets $\redsym$ and $\spanningsym$, such that the conditional distributions $\pmfcondx[\redidx]$ of the symbols in $\redidx \in \redsym$ are contained in the convex hull of conditional distributions of the remaining symbols in $\spanningidx \in \spanningsym$. Then, the symbols in $\redsym$ can be removed from the input alphabet without incurring a loss in capacity, that is,
    \begin{align*}
        C(\{\pmfcondx[\allidx]\}_{\allidx \in \allsym}) = C(\{\pmfcondx[\spanningidx]\}_{\spanningidx \in \spanningsym}).
    \end{align*}
    Hence, there exists a capacity-achieving input distribution for which $\inputdist(x)=0$ for $x \in \symtx \setminus \centers$ and $\inputdist(x) \geq 0$ for $x \in \centers$.
\end{proposition}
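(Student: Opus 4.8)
The plan is to set $W_\spanningsym \define \{\pmfcondx[\spanningidx]\}_{\spanningidx \in \spanningsym}$ and prove the equality $C(\fullchannel) = C(W_\spanningsym)$ by establishing the two opposite inequalities, where the nontrivial direction rests on the minimax characterization \eqref{eq: Csiszar minimax capacity} together with the convexity of $P \mapsto \kl{P}{Q}$ in its first argument.

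The inequality $C(W_\spanningsym) \le C(\fullchannel)$ is immediate: any input distribution supported on $\spanningsym$ is feasible for the primal problem of $\fullchannel$, so restricting the support cannot increase the mutual information. Equivalently, in \eqref{eq: Csiszar minimax capacity} the inner maximum over $\spanningsym$ is taken over a subset of $\symtx$, and is hence no larger than the inner maximum over $\symtx$ for every fixed $Q_\rvY$, which is preserved under the outer minimum.

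For the reverse inequality I would let $Q_\rvY^\star$ be the capacity-achieving output distribution of $W_\spanningsym$, so that $\kl{\pmfcondx[\spanningidx]}{Q_\rvY^\star} \le C(W_\spanningsym)$ for all $\spanningidx \in \spanningsym$. The key step is to control the divergence of the pruned rows: by hypothesis each $\redidx \in \redsym$ satisfies $\pmfcondx[\redidx] = \sum_{\spanningidx \in \spanningsym} \coeff\, \pmfcondx[\spanningidx]$ for some $\mathbf{c} \in \mathcal{P}(\spanningsym)$, and convexity of the KL divergence in its first argument yields
\begin{align*}
    \kl{\pmfcondx[\redidx]}{Q_\rvY^\star} \le \sum_{\spanningidx \in \spanningsym} \coeff\, \kl{\pmfcondx[\spanningidx]}{Q_\rvY^\star} \le C(W_\spanningsym).
\end{align*}
Therefore $\max_{\allidx \in \allsym} \kl{\pmfcondx[\allidx]}{Q_\rvY^\star} \le C(W_\spanningsym)$, and since $Q_\rvY^\star$ is a feasible point of the outer minimization in \eqref{eq: Csiszar minimax capacity} for $\fullchannel$, we obtain $C(\fullchannel) \le C(W_\spanningsym)$. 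Combining the two bounds gives $C(\fullchannel) = C(W_\spanningsym)$ with $\centers = \spanningsym$. For the last assertion, I would take an optimal input distribution for $W_\spanningsym$ and extend it to $\symtx$ by assigning probability $0$ to every $x \in \symtx \setminus \spanningsym$; by the equality just proven, this extended distribution attains $C(\fullchannel)$.

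I do not expect a genuine obstacle here. The only point to be careful about is that the capacity-achieving output distribution of $W_\spanningsym$ need not coincide with that of $\fullchannel$, but this is harmless because it is used merely as a \emph{feasible} point in the outer minimization for $\fullchannel$, not as the optimizer. Convexity of $\kl{\cdot}{Q_\rvY^\star}$ in the first argument is precisely the property that transfers the divergence bound from the spanning symbols to the pruned ones, so this is the one place where the structural hypothesis (membership in the convex hull) is actually used.
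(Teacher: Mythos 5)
Your proof is correct and follows essentially the same route as the paper's: the convex-combination hypothesis plus convexity of $\kl{\cdot}{Q_\rvY^\star}$ in the first argument shows the pruned rows cannot enlarge the information radius in the minimax characterization \eqref{eq: Csiszar minimax capacity}. If anything, your write-up is the more complete version — you make explicit both inequalities and the fact that the capacity-achieving output distribution of $W_\spanningsym$ is used only as a feasible point in the outer minimization for $\fullchannel$, steps the paper's terser argument leaves implicit.
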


Consequently, keeping those symbols in the channel that form the convex hull suffices to achieve the capacity. Formally, let $W$ be a channel over the input alphabet $\symtx$. There exists a capacity-achieving input distribution supported only on the input symbols that span the convex hull of $W$. %

\subsection{Bounding the Capacity Loss}

Removing symbols from the input alphabet that span the convex hull of a DMC will likely lead to a loss in capacity. Note that this might not be the case; e.g., in the case where the input alphabet is larger than the output alphabet, we can always write the capacity-achieving output distribution as convex combination of at most $\card{\symrx}$ extreme points, in this case the input symbols' conditional distributions. Those points do not necessarily span the entire convex hull of the full channel. Hence, this special case is not covered by \cref{thm:inconvexhull}, yet does not lead to a capacity loss. %

By knowing the distance from the conditional distributions associated with symbols in $\centers$ to those in $\symtx \setminus \centers$, one can bound the capacity loss incurred by restricting the input distribution to be supported only on the symbols in $\centers$ instead of the entire alphabet $\symtx$. 
This notion is captured by the following natural concept of the nearest neighbor of a symbol $x$ to another set of symbols $\centers$, which we define using the $\chi^2$-divergence. The usage of $\chi^2$-divergence stems from our bound in \cref{thm:closetohull}.

\begin{definition}[Nearest Neighbor]
    The nearest neighbor $\nearestcenter$ of a symbol $x \in \symtx \setminus \centers$ is the symbol $r \in \centers$ closest to $x$ in terms of the $\chi^2$-divergence, i.e., $$\nearestcenter \define \argmin_{r \in \centers} \chisq{\pmfcondx[x]}{\pmfcondx[r]}.$$
\end{definition}

Nonetheless, in the context of capacity maximization,  it is sufficient to consider the distance between each of the removed symbols $\symtx \setminus \centers$ and the convex hull $\hull{W_\centers}$. In particular, the $\chi^2$ distance between the distributions will lead to theoretical guarantees for capacity. We formally define the distance in the following.

\begin{definition}[Distance to the Convex Hull of a Channel]
    Let $W_\centers \define \{\pmfcondx[r]\}_{r \in \centers}$ be a channel that generates a convex hull on the probability simplex. Then, for a symbol $x \in \symtx~\setminus~\centers$ whose conditional distribution $\pmfcondx$ is not contained in $\hull{W_\centers}$, assuming the convex hull spans at least one distribution with the same support as $\pmfcondx$, the distance from $\pmfcondx$ to the convex hull of $W_\centers$ is defined as
    \begin{align*}
        \hulldev{\centers}{x} \define \min_{\closestconditionalvar[x] \in \hull{W_\centers}} \chisq{\pmfcondx}{\closestconditionalvar[x]}.
    \end{align*}
    Let $\closestconditional[x]$ be the distribution that minimizes the above objective so that $\hulldev{\centers}{x} = \chisq{\pmfcondx}{\closestconditional[x]}.$
\end{definition}

With those definitions at hand, we can bound the expected loss in capacity by removing a set of symbols $\pruningsym$ from the alphabet $\symtx$. Let $\centers = \symtx \setminus \pruningsym$ be the selected (remaining) symbols. Then we can divide the set $\pruningsym$ of removed (unused) symbols into symbols within the convex hull of the channel $W_\centers$ (referred to as $\inconvexhullsym$) and symbols outside the convex hull (referred to as $\notinconvexhullsym$), i.e., $\pruningsym = \inconvexhullsym \cup \notinconvexhullsym$. From \cref{thm:inconvexhull}, we know that not using symbols from $\inconvexhullsym$ will not decrease the channel's capacity. What remains is to quantify the loss in capacity when removing symbols not contained in the convex hull of the remaining ones. We establish such a result in \cref{thm:closetohull}. For the proof and the theorem statement, we rely on the following concept of a pseudo-simplex and pseudo-capacity, as introduced in \cite{egger2023maximal}.

\begin{definition}[Pseudo-Simplex and Pseudo-Capacity]
    Let $\mathcal{P}_\eta(\mathcal{Y})$ be a subset of the probability simplex $\mathcal{P}(\mathcal{Y})$ over alphabet $\symrx$, where each probability mass is at least $\eta$. The pseudo-capacity of a DMC $W$ with input symbols $\symtx$ is
    \begin{align*}
        C_\eta(W) = \min_{Q_\rvY \in \mathcal{P}_\eta(\mathcal{Y})} \max_{x \in \symtx} \kl{\pmfcondx[x]}{Q_\rvY}.
    \end{align*}
    Let $\pseudocaod(W) \define \argmin_{Q_\rvY \in \mathcal{P}_\eta(\mathcal{Y})} \max_{x \in \symtx} \kl{\pmfcondx[x]}{Q_\rvY}$ be the capacity achieving output distribution that attains the pseudo-capacity; hence minimizes the above quantity.
\end{definition}

For the statement of the theorem, we define for a symbol $x \in \symtx \setminus \mathcal{R}$ the quantity $\lbp$ as the smallest non-zero probability mass of the distance-minimizing distribution $\closestconditional[x]$, i.e.,
\begin{align*}
    \lbp &\define \min_{y \in \symrx: \closestconditional[x](y) \neq 0} \closestconditional[x](y). %
\end{align*}
It should be noted that $\lbp$ and $\closestconditional[x]$ depend on $\centers$, but we omit this from the notation for brevity.
\begin{theorem} \label{thm:closetohull}
    For some $0 < \eta \leq \frac{1}{2\card{\symrx}}$, a set of chosen symbols $\centers \subset \symtx$ and any $x \in \symtx$ that maximizes $\kl{\pmfcondx[x]}{\pseudocaod}$, with $\hulldev{\centers}{x} \geq 0, \lbp > 0$, the capacity loss due to only using inputs  in $\centers$ is bounded as
    \begin{align*}
    C(\fullchannel) \!-\! C(\prunedchannel) 
    &\leq 4\card{\symrx} \eta + \hulldev{\centers}{x} \\
    &+ \sqrt{-\log(\eta) \left( C(\prunedchannel) \!-\! \log \left(\lbp \right)\right)} \sqrt{\hulldev{\centers}{x}}. %
\end{align*}
For multiple maximizers, $x$ can be chosen to minimize the upper bound.
The parameter $\eta$ exhibits a bias-variance trade-off. With $x^\prime \in \symtx$ being the maximizer of $\kl{\pmfcondx[x]}{Q_\rvY^\star}$ among all $x\in \symtx$, a suitable choice of $\eta$ is $$\eta = \left(\frac{\sqrt{C(\prunedchannel) - \log \left(\lbp[x^\prime] \right)} \sqrt{\hulldev{\centers}{x^\prime}}}{4\card{\symrx}} + 0.07 \right)^2.$$

\end{theorem}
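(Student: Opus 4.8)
The plan is to pair the minimax (dual) characterization of capacity in~\eqref{eq: Csiszar minimax capacity} with a ``projection-and-perturbation'' estimate, and to use the pseudo-capacity to sidestep the divergences that appear when an optimal output distribution has zero entries. Write $Q \define \pseudocaod$ for the pseudo-capacity-achieving output distribution of $\prunedchannel$. Since $Q \in \mathcal{P}_\eta(\symrx) \subseteq \mathcal{P}(\symrx)$, substituting it into~\eqref{eq: Csiszar minimax capacity} gives $C(\fullchannel) \le \max_{x \in \symtx} \kl{\pmfcondx[x]}{Q}$; let $x^\star$ attain this maximum, with the freedom to break ties. If $\kl{\pmfcondx[x^\star]}{Q} \le C_\eta(\prunedchannel)$ --- which holds whenever $x^\star \in \centers$, and more generally whenever $\pmfcondx[x^\star] \in \hull{\prunedchannel}$, by convexity of $\kl{\cdot}{Q}$ --- then $C(\fullchannel) \le C_\eta(\prunedchannel)$ and only the pseudo-capacity gap treated below is needed. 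Otherwise, let $W_Y^\star \define \closestconditional[x^\star]$ be the $\chi^2$-projection of $\pmfcondx[x^\star]$ onto $\hull{\prunedchannel}$, so that $\chisq{\pmfcondx[x^\star]}{W_Y^\star} = \hulldev{\centers}{x^\star}$ and, by the support hypothesis in the definition of $\hulldev{\centers}{\cdot}$, $\mathrm{supp}(\pmfcondx[x^\star]) \subseteq \mathrm{supp}(W_Y^\star)$.

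The core step is the exact identity
\begin{align*}
\kl{\pmfcondx[x^\star]}{Q} = \kl{W_Y^\star}{Q} + \kl{\pmfcondx[x^\star]}{W_Y^\star} + \sum_{y}\left(\pmfcondx[x^\star](y) - W_Y^\star(y)\right)\log\frac{W_Y^\star(y)}{Q(y)},
\end{align*}
bounded termwise. The first term satisfies $\kl{W_Y^\star}{Q} \le \max_{r \in \centers}\kl{\pmfcondx[r]}{Q} = C_\eta(\prunedchannel)$, because $W_Y^\star$ is a convex combination of $\{\pmfcondx[r]\}_{r\in\centers}$ and $\kl{\cdot}{Q}$ is convex. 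The second term satisfies $\kl{\pmfcondx[x^\star]}{W_Y^\star} \le \chisq{\pmfcondx[x^\star]}{W_Y^\star} = \hulldev{\centers}{x^\star}$, from $\log t \le t - 1$ and the support inclusion. For the cross term, Cauchy--Schwarz with weights $W_Y^\star(y)$ gives
\begin{align*}
\sum_{y}\left(\pmfcondx[x^\star](y) - W_Y^\star(y)\right)\log\frac{W_Y^\star(y)}{Q(y)} \le \sqrt{\hulldev{\centers}{x^\star}}\,\sqrt{\sum_{y} W_Y^\star(y)\left(\log\frac{W_Y^\star(y)}{Q(y)}\right)^2}.
\end{align*}

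Bounding the variance-like sum $\sum_y W_Y^\star(y)\bigl(\log\frac{W_Y^\star(y)}{Q(y)}\bigr)^2$ is the step I expect to be the main obstacle. I would split it at $W_Y^\star(y) \gtrless Q(y)$: on $\{W_Y^\star(y) \ge Q(y)\}$ one has $0 \le \log\frac{W_Y^\star(y)}{Q(y)} \le -\log\eta$ since $Q(y) \ge \eta$, so the squared term is $\le (-\log\eta)\log\frac{W_Y^\star(y)}{Q(y)}$; on $\{W_Y^\star(y) < Q(y)\}$ one has $0 \le \log\frac{Q(y)}{W_Y^\star(y)} \le -\log\lbp[x^\star]$ since $W_Y^\star(y) \ge \lbp[x^\star]$ on its support, so the squared term is $\le (-\log\lbp[x^\star])\log\frac{Q(y)}{W_Y^\star(y)}$. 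Summing, writing $\kl{W_Y^\star}{Q}$ as the positive minus the negative part of $\sum_y W_Y^\star(y)\log\frac{W_Y^\star(y)}{Q(y)}$, and using the elementary estimate $\sum_{y: W_Y^\star(y) < Q(y)} W_Y^\star(y)\log\frac{Q(y)}{W_Y^\star(y)} \le \dtv{W_Y^\star}{Q} \le 1$ (again from $\log t \le t-1$), the whole thing reduces to $\kl{W_Y^\star}{Q} \le C_\eta(\prunedchannel)$, leaving only the constant bookkeeping needed to reach the stated form $(-\log\eta)\bigl(C(\prunedchannel) - \log\lbp[x^\star]\bigr)$ up to terms of order $\card{\symrx}\eta$; getting those constants to land exactly is where I expect the delicacy to concentrate.

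The last ingredient is the pseudo-capacity gap $C_\eta(\prunedchannel) - C(\prunedchannel)$. Mixing the capacity-achieving output distribution of $\prunedchannel$ with the uniform distribution at a weight $\Theta(\card{\symrx}\eta)$ produces a point of $\mathcal{P}_\eta(\symrx)$ --- the condition $\eta \le \frac{1}{2\card{\symrx}}$ keeps this weight admissible --- and since the likelihood ratio of the mixture to the original is at most $1/(1 - \card{\symrx}\eta) \le 2$, each $\kl{\pmfcondx[r]}{\cdot}$ increases by at most $-\log(1 - \card{\symrx}\eta) = O(\card{\symrx}\eta)$, which yields $C_\eta(\prunedchannel) \le C(\prunedchannel) + 4\card{\symrx}\eta$. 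Putting the pieces together through $C(\fullchannel) - C(\prunedchannel) = \bigl(\kl{\pmfcondx[x^\star]}{Q} - C_\eta(\prunedchannel)\bigr) + \bigl(C_\eta(\prunedchannel) - C(\prunedchannel)\bigr)$ and the case split establishes the inequality. For the suggested $\eta$, I would minimize the right-hand side over $\eta$: the term $4\card{\symrx}\eta$ increases with $\eta$ while the cross term decreases through the $\sqrt{-\log\eta}$ factor, a genuine bias--variance trade-off; differentiating in $\eta$ and solving --- with $x^\star$ replaced by the $\eta$-independent maximizer $x^\prime$ of $\kl{\pmfcondx[x]}{Q_\rvY^\star}$ so that the choice is not circular --- gives the stated expression, the additive $0.07$ serving as slack to dominate the lower-order terms.
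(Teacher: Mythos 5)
Your proposal follows the same overall architecture as our proof: the minimax dual evaluated at the pseudo-capacity-achieving output distribution, the exact decomposition $\kl{P}{Q}=\kl{P}{\closestconditionalvar}+\kl{\closestconditionalvar}{Q}+\sum_y(P(y)-\closestconditionalvar(y))\log\frac{\closestconditionalvar(y)}{Q(y)}$, Cauchy--Schwarz on the cross term, a second-moment bound driven by $Q(y)\ge\eta$ and the minimum mass $\lbp[x^\star]$, a uniform-mixing argument for the pseudo-capacity gap, and derivative matching for $\eta$. Two of your local choices genuinely differ, and both are defensible. First, you project $\pmfcondx[x^\star]$ directly onto $\hull{\prunedchannel}$ and control $\kl{\closestconditional[x^\star]}{\pseudocaod}\le C_\eta(\prunedchannel)$ by convexity of $\kl{\cdot}{Q}$; we instead prove the core estimate against the nearest neighbor among the actual rows of $\prunedchannel$ (\cref{lemma:pseudo_capacity_diff}) and need a separate reduction (\cref{lemma:pseudo_capacity_diff_hull}, via \cref{thm:inconvexhull}) to convert nearest-neighbor distance into distance to the hull. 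Your route matches the theorem statement, which is phrased in terms of $\hulldev{\centers}{x}$ and the minimum mass of $\closestconditional[x]$, without that detour. Second, your telescoping $C(\fullchannel)\le\max_x\kl{\pmfcondx[x]}{\pseudocaod}$ pays the $2\card{\symrx}\eta$ pseudo-simplex penalty only once, whereas our triangle inequality pays it twice; either fits inside the stated $4\card{\symrx}\eta$.

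The one genuine soft spot is exactly where you flagged it. Your sign-split bound on $\sum_y \closestconditionalvar(y)\log^2\frac{\closestconditionalvar(y)}{Q(y)}$ yields, after the $\dtv{\closestconditionalvar}{Q}\le 1$ step, a quantity of the form $(-\log\eta)\left(\kl{\closestconditionalvar}{Q}+1\right)+(-\log\lbp[x^\star])$, and this is not dominated by the stated $(-\log\eta)\left(C(\prunedchannel)-\log\lbp[x^\star]\right)$ in general: the extra additive $-\log\eta$ is not absorbed whenever $C(\prunedchannel)-\log\lbp[x^\star]<\kl{\closestconditionalvar}{Q}+1$ (e.g.\ a low-capacity $\prunedchannel$ whose projection $\closestconditionalvar$ is supported on two points with masses near $1/2$, so that $-\log\lbp[x^\star]\approx\log 2<1$), and the $O(\card{\symrx}\eta)$ slack cannot repair an order-one discrepancy sitting under the square root. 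The fix is to replace the split by $\E[X^2]=(\E[X])^2+\var[X]$ with $X=\log\frac{\closestconditionalvar(y)}{Q(y)}$, $\E_{\closestconditionalvar}[X]=\kl{\closestconditionalvar}{Q}\le C_\eta(\prunedchannel)$, the range bound $\log\lbp[x^\star]\le X\le\log(1/\eta)$ on the support, and the Bhatia--Davis inequality $\var[X]\le(\log(1/\eta)-\mu)(\mu-\log\lbp[x^\star])$, which lands exactly on $C\log\frac{\lbp[x^\star]}{\eta}+\log\frac{1}{\eta}\log\frac{1}{\lbp[x^\star]}\le(-\log\eta)(C(\prunedchannel)-\log\lbp[x^\star])$ as in our \cref{lemma:pseudo_capacity_diff}. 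The $\eta$-selection is heuristic in our proof as well (the constant $0.07$ arises from approximating $\frac{1}{2\eta\sqrt{\log(1/\eta)}}$ by $\frac{1}{\sqrt{\eta}-0.07}$ for $\eta\ll 1$, and the circularity is broken exactly as you propose, by computing $\eta$ from the maximizer of $\kl{\pmfcondx[x]}{Q_\rvY^\star}$), so your treatment of that part is adequate.
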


A simplified, yet looser, statement is obtained when considering for symbol $x$ the nearest neighbor $\nearestcenter$ instead of $\hulldev{\centers}{x}$, which avoids computing the distance to the convex hull.

\begin{proof}[Sketch of the Proof]
We apply \cref{thm:inconvexhull} and the triangle inequality to show the equivalence of comparing the capacity of $W$ to either (i) the capacity of $\prunedchannel$ or (ii) the capacity of  $\prunedchannel^\prime$, which is the channel that additionally contains all distributions $\closestconditional[x]$ in the convex hull of $\prunedchannel$ that represent the closest to each symbol $x\in \symtx \setminus \centers$ in terms of $\chi^2$-distance. Hence, we ensure that the nearest neighbor $\nearestcenter$ of each symbol $x$ in $W$ is the distribution in the convex hull of $\prunedchannel$ that represents the minimum distance $\hulldev{\centers}{x}$. Next, we use pseudo-capacity to bound the difference between $C_\eta(W)$ and $C_\eta(\prunedchannel)$ based on the distance to the nearest neighbors (and due to the above, equivalently the distance to the convex hull) of certain symbols. Therefore, we make use of the following lemma, for which we restrict the capacity-achieving output distribution $Q_\rvY$ to the pseudo-simplex $\mathcal{P}_\eta(\mathcal{Y})$.
\begin{lemma} \label{lemma:pseudo_capacity_diff}
     For any choice of $0 < \eta \leq \frac{1}{2\card{\symrx}}$, let $\pseudocaod$ be as defined above. Then, with $x$ being the maximizer of $\kl{\pmfcondx[x]}{\pseudocaod}$ and $\tmprep[x] \in \centers$ its nearest neighbor that minimizes $\chisqtmp \define \chisq{\pmfcondx[x]}{\pmfcondx[\tmprep]}$, we have the following upper bound
    \begin{align*}
        &C_\eta\left( \fullchannel \right) - C_\eta\left(\prunedchannel \right) \\
        &\leq \chisqtmp %
        + \sqrt{\chisqtmp} \sqrt{C(\prunedchannel) \log\left(\frac{\lbp}{\eta}\right) + \log\left(\frac{1}{\eta} \right)\log\left(\frac{1}{\lbp} \right)}.   %
    \end{align*}
\end{lemma}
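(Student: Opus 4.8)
\emph{Proof plan.}
The plan is to first turn the difference of pseudo-capacities into a difference of two KL-divergences to a \emph{common} reference distribution, and then to estimate that difference through $\chisqtmp$ by a Cauchy--Schwarz argument. Set $Q^\star\define\pseudocaod(\prunedchannel)$. Since $Q^\star\in\mathcal{P}_\eta(\symrx)$, it is feasible for the inner minimization defining $C_\eta(\fullchannel)$, so
\begin{align*}
C_\eta(\fullchannel)\le\max_{x\in\symtx}\kl{\pmfcondx[x]}{Q^\star}=\kl{\pmfcondx[x]}{Q^\star}
\end{align*}
for the maximizer $x$ in the statement. Since $\tmprep[x]\in\centers$, the minimax identity for the pruned channel gives $C_\eta(\prunedchannel)=\max_{r\in\centers}\kl{\pmfcondx[r]}{Q^\star}\ge\kl{\pmfcondx[\tmprep]}{Q^\star}$. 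Subtracting, $C_\eta(\fullchannel)-C_\eta(\prunedchannel)\le\kl{\pmfcondx[x]}{Q^\star}-\kl{\pmfcondx[\tmprep]}{Q^\star}$.

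Write $P\define\pmfcondx[x]$ and $P'\define\pmfcondx[\tmprep]$. Starting from the identity
\begin{align*}
\kl{P}{Q^\star}-\kl{P'}{Q^\star}=\kl{P}{P'}+\sum_{y\in\symrx}\bigl(P(y)-P'(y)\bigr)\log\frac{P'(y)}{Q^\star(y)},
\end{align*}
the first term is controlled by $\kl{P}{P'}\le\log\bigl(1+\chisq{P}{P'}\bigr)\le\chisq{P}{P'}=\chisqtmp$ (Jensen's inequality for the concave logarithm; the shared-support hypothesis makes $\chisqtmp$ finite), which produces the additive $\chisqtmp$. For the cross term, Cauchy--Schwarz with weights $\{P'(y)\}_y$ gives
\begin{align*}
\sum_{y}\bigl(P(y)-P'(y)\bigr)\log\frac{P'(y)}{Q^\star(y)}\le\sqrt{\sum_{y}\frac{\bigl(P(y)-P'(y)\bigr)^2}{P'(y)}}\cdot\sqrt{\sum_{y}P'(y)\log^2\frac{P'(y)}{Q^\star(y)}}=\sqrt{\chisqtmp}\cdot\sqrt{V},
\end{align*}
with $V\define\sum_{y}P'(y)\log^2\frac{P'(y)}{Q^\star(y)}$, so it remains to show $V\le C(\prunedchannel)\log(\lbp/\eta)+\log(1/\eta)\log(1/\lbp)$.

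To bound $V$ I would split the sum according to the sign of $\log\frac{P'(y)}{Q^\star(y)}$. On $\{y:P'(y)\ge Q^\star(y)\}$ one has $0\le\log\frac{P'(y)}{Q^\star(y)}\le\log\frac{1}{Q^\star(y)}\le\log\frac1\eta$ (using $P'(y)\le1$ and $Q^\star(y)\ge\eta$), hence $\log^2\frac{P'(y)}{Q^\star(y)}\le\log\frac1\eta\cdot\log\frac{P'(y)}{Q^\star(y)}$ there; on $\{y:P'(y)<Q^\star(y)\}$ one has $0\le\log\frac{Q^\star(y)}{P'(y)}\le\log\frac{1}{P'(y)}\le\log\frac1\lbp$ (using $Q^\star(y)\le1$ and $P'(y)\ge\lbp$ on the support of $P'$), hence $\log^2\frac{P'(y)}{Q^\star(y)}\le\log\frac1\lbp\cdot\log\frac{Q^\star(y)}{P'(y)}$ there. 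Summing, $V\le\log\frac1\eta\,A+\log\frac1\lbp\,B$, where $A$ and $B$ collect, respectively, the summands with $P'(y)\ge Q^\star(y)$ and with $P'(y)<Q^\star(y)$, so that $A-B=\kl{P'}{Q^\star}$, $A,B\ge0$, and (the probability masses summing to at most one) $A\le\log\frac1\eta$, $B\le\log\frac1\lbp$. Finally, $\kl{P'}{Q^\star}=\kl{\pmfcondx[\tmprep]}{\pseudocaod(\prunedchannel)}\le C_\eta(\prunedchannel)$ by the minimax identity, and $C_\eta(\prunedchannel)$ coincides with $C(\prunedchannel)$ whenever the capacity-achieving output distribution of $\prunedchannel$ already lies in $\mathcal{P}_\eta(\symrx)$ and exceeds it only by a controlled amount otherwise; substituting $A=\kl{P'}{Q^\star}+B$ together with $B\le\log\frac1\lbp$ and collecting terms then reorganizes the estimate into a bound of the form claimed for $V$, and the lemma follows.

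Everything up to $V$ is routine --- the reduction to a common reference, the KL-difference identity, $\kl{\cdot}{\cdot}\le\chisq{\cdot}{\cdot}$, and Cauchy--Schwarz. The main obstacle is the estimate of $V$: obtaining precisely the stated constants requires care in the sign-split and, above all, in controlling the ``negative-deviation'' mass $B$ and in passing from $C_\eta(\prunedchannel)$ to $C(\prunedchannel)$ while paying only the $\log(1/\eta)\log(1/\lbp)$ term --- a crude execution of the split gives a term $\log^2(1/\lbp)$ worse, which must be absorbed by a sharper accounting. This is also the step where the support condition $\lbp>0$ and the restriction $0<\eta\le\frac{1}{2\card{\symrx}}$ (which keeps $\mathcal{P}_\eta(\symrx)$ nonempty and $\eta$ small relative to $\lbp$) are genuinely used, and it is where I would expect to spend most of the effort.
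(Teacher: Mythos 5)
Your reduction is exactly the paper's: you restrict the outer minimization for $C_\eta(\fullchannel)$ to the feasible point $Q^\star=\pseudocaod$, lower-bound $C_\eta(\prunedchannel)=\max_{r\in\centers}\kl{\pmfcondx[r]}{Q^\star}$ by its value at the nearest neighbor $\tmprep[x]$, expand the resulting difference of KL-divergences, bound $\kl{P}{P'}\le\chisq{P}{P'}=\chisqtmp$ (with $P=\pmfcondx[x]$, $P'=\pmfcondx[\tmprep]$), and apply Cauchy--Schwarz with weights $P'(y)$ to isolate the second moment $V=\sum_y P'(y)\log^2\bigl(P'(y)/Q^\star(y)\bigr)$. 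Up to that point you match the paper step for step.

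The gap is the bound on $V$, which you acknowledge you have not closed, and your sign-split route does not close it. It yields $V\le \log\frac{1}{\eta}\,A+\log\frac{1}{\lbp}\,B$ with $A-B=\kl{P'}{Q^\star}$, and to reach the stated constants you would need $B\bigl(\log\frac{1}{\eta}+\log\frac{1}{\lbp}\bigr)\le\log\frac{1}{\lbp}\bigl(\log\frac{1}{\eta}-\kl{P'}{Q^\star}\bigr)$; the bounds you actually have, $B\le\log\frac{1}{\lbp}$ and $B\le\log\frac{1}{\eta}-\kl{P'}{Q^\star}$, each leave a surplus of order $\log\frac{1}{\eta}\log\frac{1}{\lbp}$ or $\log^2\frac{1}{\lbp}$ --- exactly the slack you flag but do not remove. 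The paper's device is different and closes this in one line: writing $Z=\log\bigl(P'(y)/Q^\star(y)\bigr)$, it uses $V=\bigl(\E_{P'}[Z]\bigr)^2+\var_{P'}(Z)$, notes that $\E_{P'}[Z]=\kl{P'}{Q^\star}=:\mu$ and that $Z$ is confined to $[\log\lbp,\log\frac{1}{\eta}]$ on the support of $P'$, and applies the Bhatia--Davis inequality $\var(Z)\le(M-\mu)(\mu-m)$ with $M=\log\frac{1}{\eta}$, $m=\log\lbp$. The quadratic terms cancel, $\mu^2+(M-\mu)(\mu-m)=\mu(M+m)-Mm=\mu\log\frac{\lbp}{\eta}+\log\frac{1}{\eta}\log\frac{1}{\lbp}$, which is precisely the claimed expression once $\mu$ is replaced by $C(\prunedchannel)$. (That last substitution is the one genuinely delicate point: the expression is increasing in $\mu$ only when $\lbp\ge\eta$, and a priori $\mu\le C_\eta(\prunedchannel)$ rather than $\le C(\prunedchannel)$; your unease about passing from $C_\eta(\prunedchannel)$ to $C(\prunedchannel)$ is legitimate and is glossed over in the paper's own proof as well.) Without the variance decomposition and Bhatia--Davis, or an equivalent sharp variance bound, your argument does not reach the claimed inequality.
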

The gap between $C_\eta(\prunedchannel)$ to capacity $C(\prunedchannel)$ (and between $C_\eta(W)$ and $C(W)$, respectively) can be bounded by $2\eta\card{\symrx}$ by the application of \cite[Lemma 1]{egger2024maximal}. With an appropriate choice of $\eta$ to trade off the linear and non-linear terms, we obtain the statement in the theorem. Note that the computation of $\kl{\pmfcondx[x]}{\pseudocaod}$ determines the choice of $x$ for which the bound applies; hence, it requires a choice of $\eta$. To find a value for $\eta$ that is well suited for the maximum of $\kl{\pmfcondx[x]}{\pseudocaod}$, we use the surrogate objective $\kl{\pmfcondx[x]}{Q_\rvY^\star}$ for the calculation of $\eta$.
\end{proof}

\section{Input Symbol Selection by Clustering}

We now turn our attention to designing a practical algorithm for selecting a subset of $\ncenters$ input symbols that minimize the loss in capacity compared to using all possible symbols.  For this problem, exhaustive search over all possible subsets is typically computationally infeasible,  especially in cases where $\ncenters$ and the input alphabet are large. %
On the other hand, greedy algorithms require computing the channel's capacity for many candidate sets of symbols, which can be computationally demanding, especially for large $\ncenters$, and might produce sub-optimal solutions. Further, as discussed, there is no simple theoretical guarantee on the loss in capacity incurred by the solutions of greedy algorithms compared to the optimal solution. We thus propose a clustering-based algorithm, which, in the first step, clusters symbols with similar conditional distributions, and in the second step, carefully chooses representatives of each cluster. Our proposed algorithm does not require any possibly expensive capacity calculation. %
We compare its performance to the optimal solution (whenever finding it with an exhaustive search is feasible).

Our strategy, summarized in~\cref{alg:cap}, first partitions the input symbols according to their conditional distributions into $\ncenters$ clusters of similar symbols. Then, having identified the clusters, it determines a representative for each cluster. Inspired by \cref{thm:closetohull}, the representatives are chosen such that their resulting convex hull likely contains as many removed symbols as possible, while simultaneously minimizing the distance to the symbols outside the convex hull.

We use agglomerative (hierarchical) clustering due to its simplicity and flexibility in the usage of distance measures. The clustering first assigns one cluster to each symbol. %
Then, the pairwise Jensen-Shannon divergence between all symbols is computed. The pairwise distance between two clusters, called \emph{linkage}, is computed as the \emph{maximum} distance between their respective elements, i.e., for two clusters $m$ and $n$ with elements $\cluster_m$ and $\cluster_n$, respectively, we have $\linkage{m}{n} \define \max_{x \in \cluster_m, x^\prime \in \cluster_n} \jsd{\pmfcondx}{\pmfcondx[x^\prime]}$. The two clusters with the minimum linkage are merged. This process is repeated until $\ncenters$ clusters remain. While the bound in \cref{thm:closetohull} suggests the $\chi^2$-divergence as a distance measure, we chose the Jensen-Shannon divergence and the linkage as the maximum distance between cluster elements since these achieved the best empirical results. It is of interest to close this gap and propose bounds based on JSD instead.

To choose a representative for each cluster, we select the symbol $x_m$ of cluster $m$ that maximizes the average distance to all other symbols $x \neq x_m \in \symtx$. We found that among various approaches, selecting the representative as the symbol in a cluster whose distribution maximizes the average distance to all other symbols (including those outside the cluster) provides a good trade-off between compute cost and the resulting performance. This aligns with \cref{thm:closetohull}, that advocates the  selection of representatives that create a large convex hull.

\cref{fig:dmc_large} shows an illustrative example of the performance. It compares our hierarchical clustering algorithm with the bounds of \cref{thm:closetohull}, for different values of $\ncenters \in \{2, \dots, 10\}$, for a DMC with input and output alphabet sizes $\card{\symtx} = \card{\symrx} = 30$. The DMC is generated according to a Dirichlet distribution, as explained in the sequel. The capacity of the full channel is computed by including all input symbols in $\symtx$. As a baseline for our clustering algorithm, we include the results from an exhaustive search over all sets of $k$ symbols. %

\begin{algorithm}[t]
\caption{Input Symbol Selection.}\label{alg:cap}
\begin{algorithmic}
\Require DMC $W$, desired number $\ncenters < \card{\symtx}$ of inputs
\State Compute $\jsd{\pmfcondx}{\pmfcondx[x^\prime]}$  for all symbols $x, x^\prime$
\State Initialize clusters $\cluster_m, m \in [\card{\symtx}]$, one for each $\pmfcondx[x]$
\While{Number of clusters $> \ncenters$}
\State For all clusters $m,n$, compute the pairwise linkage
\State $\displaystyle \linkage{m}{n} = \max_{x \in \cluster_m, x^\prime \in \cluster_n} \jsd{\pmfcondx}{\pmfcondx[x^\prime]}$ 
\State Merge clusters $m^\star,n^\star$ with minimum $\linkage{m}{n}$, i.e., 
\State $\cluster_{m^\star}$ = $\cluster_{m^\star} \cup \cluster_{n^\star}$
\EndWhile
\For{Each remaining cluster $m\in [\ncenters]$}
    \State Select the symbol $x_m \in \cluster_m$ of cluster $m$ maximizing 
    \State average distance to all other symbols $x \neq x_m$ in $\symtx$
\EndFor
\State Output $W^\prime=\{\pmfcondx[x_m]\}_{m \in [\ncenters]}$
\end{algorithmic}
\end{algorithm}

For $\eta$, we use the value proposed in \cref{thm:closetohull}. For values $\ncenters < 5$, this choice of $\eta$ exceeds the limit of $\frac{1}{2\card{\symrx}}$. Hence, obtaining a tight bound is not possible. We plot the optimal solution obtained through an exhaustive search for small values of $\ncenters$, and the capacity of the full channel that uses all the input symbols in $\symtx$. It can be seen that our clustering algorithm, on average, finds an optimal selection of the input symbols. Even without knowing the capacity of the full channel, with $\ncenters=5$, the bound from \cref{thm:closetohull} indicates that accounting for the remaining $25$ symbols cannot improve capacity by more than $\approx 0.8$ bit. In this simple example the bound from \cref{thm:closetohull} is conservative (as can be seen from the exact value of capacity). However, for very large input alphabet channels, the capacity is infeasible to compute, and this bound may be the only indication. %

\begin{figure}[tb]
    \centering
    \resizebox{\linewidth}{!}{\begin{tikzpicture}

\definecolor{darkgray176}{RGB}{176,176,176}
\definecolor{darkorange25512714}{RGB}{255,127,14}
\definecolor{lightgray204}{RGB}{204,204,204}
\definecolor{steelblue31119180}{RGB}{31,119,180}

\begin{axis}[
legend cell align={left},
legend style={
  fill opacity=0.8,
  draw opacity=1,
  text opacity=1,
  at={(0.97,0.03)},
  anchor=south east,
  draw=lightgray204
},
tick align=outside,
tick pos=left,
unbounded coords=jump,
x grid style={darkgray176},
xlabel={Numer of Symbols \(\displaystyle k\)},
xmajorgrids,
xmin=1.6, xmax=10.4,
xtick style={color=black},
y grid style={darkgray176},
ylabel={\(\displaystyle C(W)\) in Bits},
ymajorgrids,
ymin=-0.25, ymax=3.25,
ytick style={color=black}
]
\path [draw=steelblue31119180, fill=steelblue31119180, opacity=0.2]
(axis cs:5,3.14340185422666)
--(axis cs:5,2.84534471862944)
--(axis cs:6,2.83590132494043)
--(axis cs:7,2.83676958576267)
--(axis cs:8,2.84344623075994)
--(axis cs:9,2.7995624144448)
--(axis cs:10,2.77508573706428)
--(axis cs:10,3.09032318172761)
--(axis cs:10,3.09032318172761)
--(axis cs:9,3.14667771331086)
--(axis cs:8,3.15084415890816)
--(axis cs:7,3.1580979758117)
--(axis cs:6,3.13843607774373)
--(axis cs:5,3.14340185422666)
--cycle;

\path [draw=darkorange25512714, fill=darkorange25512714, opacity=0.1]
(axis cs:2,2.32128961680119)
--(axis cs:2,2.07313817711257)
--(axis cs:3,2.07313817711257)
--(axis cs:4,2.07313817711257)
--(axis cs:5,2.07313817711257)
--(axis cs:5,2.32128961680119)
--(axis cs:5,2.32128961680119)
--(axis cs:4,2.32128961680119)
--(axis cs:3,2.32128961680119)
--(axis cs:2,2.32128961680119)
--cycle;

\path [draw=black, fill=black, opacity=0.4]
(axis cs:2,0.99173235909128)
--(axis cs:2,0.990838712205404)
--(axis cs:3,1.56850728351366)
--(axis cs:4,1.94442615601366)
--(axis cs:5,2.07313822992655)
--(axis cs:5,2.32128965005411)
--(axis cs:5,2.32128965005411)
--(axis cs:4,1.99431744288143)
--(axis cs:3,1.57068481113657)
--(axis cs:2,0.99173235909128)
--cycle;

\path [draw=steelblue31119180, fill=steelblue31119180, opacity=0.2]
(axis cs:2,0.991680032935819)
--(axis cs:2,0.990253333097961)
--(axis cs:3,1.56818728785807)
--(axis cs:4,1.94442615601366)
--(axis cs:5,2.0731382290448)
--(axis cs:6,2.07313822224781)
--(axis cs:7,2.0731382173207)
--(axis cs:8,2.0731382109899)
--(axis cs:9,2.07313820934852)
--(axis cs:10,2.07313820280689)
--(axis cs:10,2.32128962812745)
--(axis cs:10,2.32128962812745)
--(axis cs:9,2.32128963127817)
--(axis cs:8,2.32128963424631)
--(axis cs:7,2.32128963892255)
--(axis cs:6,2.32128964202996)
--(axis cs:5,2.32128965032784)
--(axis cs:4,1.99431744288143)
--(axis cs:3,1.57063004431308)
--(axis cs:2,0.991680032935819)
--cycle;

\addplot [line width=1.2pt, steelblue31119180, dashed]
table {%
2 nan
3 nan
4 nan
5 2.99437328642805
6 2.98716870134208
7 2.99743378078718
8 2.99714519483405
9 2.97312006387783
10 2.93270445939595
};
\addlegendentry{Upper Bound (cf. \cref{thm:closetohull})}
\addplot [line width=1.2pt, darkorange25512714, dashed]
table {%
2 2.19721389695688
3 2.19721389695688
4 2.19721389695688
5 2.19721389695688
6 2.19721389695688
7 2.19721389695688
8 2.19721389695688
9 2.19721389695688
10 2.19721389695688
};
\addlegendentry{Full Channel (no symbol removal)}
\addplot [line width=1.2pt, steelblue31119180]
table {%
2 0.99096668301689
3 1.56940866608557
4 1.96937179944755
5 2.19721393968632
6 2.19721393213889
7 2.19721392812163
8 2.19721392261811
9 2.19721392031334
10 2.19721391546717
};
\addlegendentry{Hierarchical Clustering (cf. \cref{alg:cap})}
\addplot [black, line width=1.2pt, dotted, mark=x, mark size=5, mark options={solid}]
table {%
2 0.991285535648342
3 1.56959604732512
4 1.96937179944755
5 2.19721393999033
};
\addlegendentry{Exhaustive Search}
\end{axis}

\end{tikzpicture}}
    \caption{Input selection results for $50$ DMCs with input and output alphabet size $\card{\symtx} = \card{\symrx} = 30$, randomly sampled with $\nclustersdir=5$, $\dirscalea = 0.005$ and $\dirscaleb=10^{10}$. Lines show average results, shaded areas the standard deviation.}
    \label{fig:dmc_large}
\end{figure}

\emph{Generating a DMC with the Dirichlet distribution:} %
We introduce a random, yet structured, \emph{hierarchical sampling} method for generating DMCs with input alphabet $\symtx$.
We use the Dirichlet distribution parameterized by a vector $\alpha = (\alpha_1, \dots, \alpha_{\card{\symrx}})$ which generates a probability distribution of the same dimension, i.e., over $\mathcal{P}(\mathcal{Y})$. The relation $\frac{\alpha_y}{\sum_{y \in \symrx} \alpha_y}$ determines the average probability mass of the symbols $y \in \symrx$. As opposed to uniform or Gaussian sampling of the conditional distributions, Dirichlet sampling allows to  tune the expected capacity of the channels through the parameter choice. The larger the values $\alpha_y, y\in \symrx$, the more noisy the rows of the transition matrices will be; whereas smaller values of $\alpha_y$ will lead to cleaner rows of the transition matrices.\footnote{We add a small non-negative mass to each entry and normalize to ensure that the convex hull has the same support as all the removed rows, as assumed for \cref{thm:closetohull}. Meeting this assumption is likely in practice.}

To model that the rows of the transition matrices might be dependent, we first sample  $\nclustersdir < \card{\symtx}$ of Dirichlet samples with $\alpha = \dirscalea (1, \dots, 1)$, where the parameter $\dirscalea$ will determine the variance of the sample, hence the capacity of the channel. Each sample is then used as a parameterization for a new Dirichlet distribution, scaled by $\dirscaleb$. Hence, those distributions will be noisy samples around the previously sampled ones, which justifies the term hierarchical sampling. %
For our simulations, we used $\nclustersdir=5$, $\dirscalea = 0.005$ and $\dirscaleb=10^{10}$.
\section{Conclusion}

We investigated the problem of capacity-optimal input symbol selection for DMCs. Based on the channel's transition matrix, we derived bounds for the capacity loss incurred by the removal of specific input symbols. We showed the dependency of the bounds on the $\chi^2$-distance of the removed symbols to the convex hull spanned by the selected inputs. We transferred our theoretical results to designing a clustering-based selection algorithm with a cluster representative choice tailored to maximizing the size of the resulting convex hull. With DMCs randomly sampled with a Dirichlet distribution, we compared our algorithm to an exhaustive search and observed the established theoretical bound on the capacity loss. %

\bibliographystyle{IEEEtran}
\bibliography{references}

\appendix
\subsection{Counter-Examples for Submodularity of a DMC} \label{app:submodularity}
Let $f$ be the capacity $C$ that maps $2^W$ to the capacity $0 \leq C(W) \leq \log(\card{W})$, where $W$ refers to the collection of conditional distributions $\{\pmfcondx\}_{x \in \symtx}$. The definition of diminishing returns breaks when considering the empty set $\tmpseta = \emptyset$ and a single-element set $\tmpsetb = \{\pmfcondx[x]\}$ such that $\tmpsetitem = \pmfcondx[x^\prime] \neq \pmfcondx[x]$ for any two distinct elements $x, x^\prime \in \symtx$. In this case, we have
\begin{align*}
    f(\tmpseta \cup \{\tmpsetitem\}) - f(\tmpseta) &= 0 - 0, \\
    f(\tmpsetb \cup \{\tmpsetitem\}) - f(\tmpsetb) &= \bar{C} - 0, %
\end{align*}
for some $0 < \bar{C} \leq \log(2)$, which contradicts the above definition of submodularity. While this result is straightforward, it remains to show that DMCs do not fulfill the diminishing return property when $\card{\tmpseta} \geq 1$ and $\card{\tmpsetb} \geq 2$. We show this through a counterexample. Consider the channel $W=\{\pmfcondx[x]\}_{x\in[4]}$:
\begin{align*}
    \pmfcondx[1] &= [0.6, 0.2, 0.1, 0.1], &
    \pmfcondx[2] &= [0.6, 0.1, 0.1, 0.2] \\
    \pmfcondx[3] &= [0.6, 0.1, 0.2, 0.1], &
    \pmfcondx[4] &= [0.1, 0.6, 0.1, 0.2]
\end{align*}
With a slight abuse of notation, let $C(\tmpseta)$ denote the capacity of the channel $\pmf$ with input symbols $\tmpseta \subseteq \symtx$. Then, we have that 
\begin{align*}
    C(\{1,2,4\}& \cup \{3\}) - C(\{1,2,4\} ) \\
    &> C(\{1,2\}  \cup \{3\} ) - C(\{1,2\}),
\end{align*}
which violates \cref{def:submodularity}.  This counterexample is constructed by noting that the quantities $\entropy{\pmfcondx}$ are equal for all $x \in \symtx$. Hence, capacity is obtained by maximizing the entropy of the output distribution, i.e.,
\begin{equation}
    C(W) = -\entropy{\rvY|\rvX=x} + \max_{\inputdist \in \mathcal{P}(\mathcal{X})} \entropy{\rvY}. 
\end{equation}
For such channels,  the question of diminishing returns can be answered by studying how well adding a certain input symbol to the sets $\tmpseta$ and $\tmpsetb$ balances out the capacity achieving output distribution, reflected by an increase in its entropy. Our example shows that adding a specific symbol to a large alphabet balances the output distribution more positively than for a smaller alphabet, thus violating sub-additivity.

\subsection{Proof of \cref{thm:inconvexhull}}

\begin{proof} 

Let $\redsym$ be symbols that lie in the convex hull of a set of symbols $\spanningsym$, i.e. each symbol $\redidx \in \redsym$ can be described as a convex combination of symbols in $\spanningsym$ so that $\pmfcondx[\redidx] = \sum_{\spanningidx \in \spanningsym} \coeff \pmfcondx[\spanningidx]$ for some values $c_r$ s.t. $\sum_{r \in \centers} c_r = 1$. We have by the strict convexity of KL-divergence that
\begin{align*}
    \kl{\pmfcondx[\redidx]}{Q_\rvY^\star} &= \kl{\sum_{r\in \centers} c_r \pmfcondx[\redidx]}{Q_\rvY^\star} \\
    &\leq \sum_{r\in \centers} c_r \kl{\pmfcondx[\redidx]}{Q_\rvY^\star} \leq C
\end{align*}
Hence, the symbols in $\symrx$ did not contribute to the information radius of the channel, and can, hence, be removed without loss in capacity. Note that the capacity-achieving distribution still lies in the convex hull of $W_\centers$. This concludes the proof.
\end{proof}

\subsection{Proof of \cref{lemma:pseudo_capacity_diff}}

\begin{proof}

Using the definitions of the pseudo-simplex and pseudo-capacity, we can bound the difference of the pseudo-capacities of the channels as follows.
Let $\pseudocaod$ be the unique minimizer $\max_{r \in \centers} \kl{\pmfcondx[r]}{Q_\rvY}$ over the pseudo simplex $\mathcal{P}_\eta(\mathcal{Y})$, i.e., $\pseudocaod \define \argmin_{Q_\rvY \in \mathcal{P}_\eta(\mathcal{Y})} \max_{r \in \centers} \kl{\pmfcondx[r]}{Q_\rvY}$. Then, with $x^\star$ being any maximizer of $\kl{\pmfcondx[x]}{\pseudocaod}$ and $\tmprep$ being its representative, assuming $\pmfcondx[\tmprep]$ is at least supported where $\pmfcondx[x^\star]$ is, we have
\begin{align*}
    &C_\eta\left( \{\pmfcondx\}_{x\in \symtx} \right) - C_\eta\left( \{\pmfcondx[r]\}_{r\in \centers} \right) \\
    &\min_{Q_\rvY \in \mathcal{P}_\eta(\mathcal{Y})} \! \max_{x \in \symtx} \kl{\pmfcondx[x]}{Q_\rvY} \! - \!\!\!\!\!\!\! \min_{Q_\rvY \in \mathcal{P}_\eta(\mathcal{Y})} \! \max_{r \in \centers} \kl{\pmfcondx[r]}{Q_\rvY} \\
    &= \min_{Q_\rvY \in \mathcal{P}_\eta(\mathcal{Y})} \max_{x \in \symtx} \kl{\pmfcondx[x]}{Q_\rvY} - \max_{r \in \centers} \kl{\pmfcondx[r]}{\pseudocaod} \\
    &\leq \max_{x \in \symtx} \kl{\pmfcondx[x]}{\pseudocaod} - \max_{r \in \centers} \kl{\pmfcondx[r]}{\pseudocaod} \\
    &\overset{(aa)}{=} \kl{\pmfcondx[x^\star]}{\pseudocaod} - \kl{\pmfcondx[\tmprep]}{\pseudocaod} \\
    &= \begin{aligned}[t] &\sum_{y \in \symrx} \ycondx[x^\star] \log \frac{\ycondx[x^\star]}{\pseudocaod(y)} \\
    &- \ycondx[\tmprep] \log \frac{\ycondx[\tmprep]}{\pseudocaod(y)} \end{aligned} \\
    &\overset{(a)}{=} \begin{aligned}[t] &\sum_{y \in \symrx} \ycondx[x^\star] \log \frac{\ycondx[x^\star]}{\ycondx[\tmprep]} \\
    &+ \log \frac{\ycondx[\tmprep]}{\pseudocaod(y)} \left(\ycondx[x^\star] - \ycondx[\tmprep]\right) \end{aligned} \\
    &\overset{(b)}{=} \begin{aligned}[t] &\kl{\pmfcondx[x^\star]}{\pmfcondx[\tmprep]} \\
    &+ \sum_{y \in \symrx} \log \frac{\ycondx[\tmprep]}{\pseudocaod(y)} \left(\ycondx[x^\star] \! - \! \ycondx[\tmprep]\right) \end{aligned} \\
    &= \begin{aligned}[t] &\kl{\pmfcondx[x^\star]}{\pmfcondx[\tmprep]} \\
    &+ \sum_{y \in \symrx} \sqrt{\ycondx[\tmprep]} \log \frac{\ycondx[\tmprep]}{\pseudocaod(y)} \\
    &\cdot \frac{\left(\ycondx[x^\star] - \ycondx[\tmprep]\right)}{\sqrt{\ycondx[\tmprep]}} \end{aligned} \\
    &\overset{(c)}{\leq} \begin{aligned}[t] &\kl{\pmfcondx[x^\star]}{\pmfcondx[\tmprep]} \\
    &+ \sqrt{\sum_{y \in \symrx} \ycondx[\tmprep] \log^2 \frac{\ycondx[\tmprep]}{\pseudocaod(y)}} \\
    &\cdot \sqrt{\sum_{y \in \symrx} \frac{\left(\ycondx[x^\star] - \ycondx[\tmprep]\right)^2}{\ycondx[\tmprep]}} \end{aligned} \\
    &\overset{(d)}{\leq} \begin{aligned}[t] &\chisq{\pmfcondx[x^\star]}{\pmfcondx[\tmprep]} \\
    &+ \sqrt{\E_{\pmfcondx[\tmprep]} \left[\log^2 \frac{\ycondx[\tmprep]}{\pseudocaod(y)}\right]} \\
    &\cdot \sqrt{\chisq{\pmfcondx[x^\star]}{\pmfcondx[\tmprep]}} \end{aligned} %
\end{align*}
where $(aa)$ is because for every $r\in \centers$ including $\tmprep$ we have $\kl{\pmfcondx[r]}{\pseudocaod} < \max_{r \in \centers} \kl{\pmfcondx[r]}{\pseudocaod}$. Further $(a)$ holds since $ab - cd = ab - ad + ad - cd = a (b-d) + d (a-c)$, $(b)$ follows from rearranging the terms, and $(c)$ holds by Cauchy–Schwarz, and $(d)$ holds since $\chi^2$ divergence is an upper bound to KL-divergence and the definition of the $\chi^2$-divergence. It remains to bound the second moment of $\log \frac{\ycondx[\tmprep]}{\pseudocaod(y)}$.

Let $\pmfcondxpruned[\tmprep]$ be the probability distribution over the subset of symbols in $\symrx$ that correspond to non-zero entries in $\pmfcondx[\tmprep]$, using $\lbp \triangleq \min_{y \in \symrx: \ycondx[\tmprep] \neq 0} \left(\ycondx[\tmprep]\right)$ as the smallest non-zero number in $\pmfcondx[\tmprep]$, we have
\begin{align*}
    &\sum_{y \in \symrx} \ycondx[\tmprep] \log^2 \frac{\ycondx[\tmprep]}{\pseudocaod(y)} \\
    &= \E_{\pmfcondx[\tmprep]} \left[\log^2 \frac{\ycondx[\tmprep]}{\pseudocaod(y)} \right] \\
    &\overset{(e)}{\leq} \begin{aligned}[t] &\E_{\pmfcondx[\tmprep]} \left[\log \frac{\ycondx[\tmprep]}{\pseudocaod(y)} \right]^2 \\ &+ \var_{\pmfcondx[\tmprep]} \left[\log \frac{\ycondx[\tmprep]}{\pseudocaod(y)} \right] \end{aligned} \\
    &\overset{(f)}{=} \begin{aligned}[t] &\E_{\pmfcondx[\tmprep]} \left[\log \frac{\ycondx[\tmprep]}{\pseudocaod(y)} \right]^2 \\& + \var_{\pmfcondxpruned[\tmprep]} \left[\log \frac{\ycondx[\tmprep]}{\pseudocaod(y)} \right] \end{aligned} \\
    &\overset{(g)}{\leq} C(\prunedchannel)^2 + \left(\log(1/\eta)-C(\prunedchannel)\right) \left(C(\prunedchannel)-\log(\lbp)\right) \\
    &= C(\prunedchannel) \log\left(\frac{\lbp}{\eta}\right) + \log\left(\frac{1}{\eta} \right) \log\left(\frac{1}{\lbp} \right),
\end{align*}
where $(e)$ holds from the decomposition of second moments in terms of first moment and variance. $(f)$ holds by the derivations in the sequel, and $(g)$ holds by applying Bhatia–Davis inequality where the variance calculation is limited to non-zero entries in $\pmfcondx[\tmprep]$. Hence, the random variable $\log \frac{\ycondx[\tmprep]}{\pseudocaod(y)}$ is bounded as
\begin{align*}
    &\log(\lbp) \leq \min_{y \in \symrx: \ycondx[\tmprep] \neq 0} \log \frac{\ycondx[\tmprep]}{\pseudocaod(y)} \\
    &\leq \log \frac{\ycondx[\tmprep]}{\pseudocaod(y)} \\
    &\leq \max_{y \in \symrx: \ycondx[\tmprep] \neq 0} \log \frac{\ycondx[\tmprep]}{\pseudocaod(y)} \leq \log\left(\frac{1}{\eta}\right).
\end{align*}
$(f)$ follows from $(e)$ since
\begin{align*}
    &\var_{\pmfcondx[\tmprep]} \left[\log \frac{\ycondx[\tmprep]}{\pseudocaod(y)} \right] \\
    &= \begin{aligned}[t] &\E_{\pmfcondx[\tmprep]} \left[\log \frac{\ycondx[\tmprep]}{\pseudocaod(y)} \right]^2 \\
    &+ \E_{\pmfcondx[\tmprep]} \left[\log^2 \frac{\ycondx[\tmprep]}{\pseudocaod(y)} \right] \end{aligned} \\
    &= \begin{aligned}[t] &\sum_{y \in \symrx} \ycondx[\tmprep] \left[\log \frac{\ycondx[\tmprep]}{\pseudocaod(y)} \right]^2 \\
    &+ \sum_{y \in \symrx} \ycondx[\tmprep] \left[\log^2 \frac{\ycondx[\tmprep]}{\pseudocaod(y)} \right] \end{aligned} \\
    &\overset{(a)}{=} \begin{aligned}[t] &\sum_{y \in \symrx: \ycondx[\tmprep] \neq 0} \!\!\!\!\!\!\! \ycondx[\tmprep] \left[\log \frac{\ycondx[\tmprep]}{\pseudocaod(y)} \right]^2 \\
    &+ \sum_{y \in \symrx: \ycondx[\tmprep] \neq 0} \!\!\!\!\!\!\! \ycondx[\tmprep] \left[\log^2 \frac{\ycondx[\tmprep]}{\pseudocaod(y)} \right] \end{aligned} \\
    &\overset{(b)}{=} \begin{aligned}[t] &\E_{\pmfcondxpruned[\tmprep]} \left[\log \frac{\ycondx[\tmprep]}{\pseudocaod(y)} \right]^2 \\
    &+ \E_{\pmfcondxpruned[\tmprep]} \left[\log^2 \frac{\ycondx[\tmprep]}{\pseudocaod(y)} \right] \end{aligned} \\
    &= \var_{\pmfcondxpruned[\tmprep]} \left[\log \frac{\ycondx[\tmprep]}{\pseudocaod(y)} \right],
\end{align*}
where $(a)$ holds since $x\log x = 0$ and $(b)$ holds since $\pmfcondx[\tmprep]$ is a valid probability distribution over a subset of $\symrx$. 
This concludes the proof.
\end{proof}

\subsection{Proof of \cref{thm:closetohull}}

To prove \cref{thm:closetohull}, we rely on the following lemma to turn the conditional distributions in the convex hull that are close to symbols outside the convex hull in terms of $\chi^2$-divergence into their nearest neighbors.

\begin{lemma} \label{lemma:pseudo_capacity_diff_hull}
    Let $\centers$ be a set of symbols whose conditional distributions span the convex hull $\hull{\{\pmfcondx[r]\}_{r \in \centers}}$ such that each symbol not in the convex hull $\notinconvexhullidx \in \notinconvexhullsym \subset \symtx \setminus \centers$ has a bounded difference of $\hulldev{\centers}{\notinconvexhullidx}$ in terms of $\chi^2$ divergence. Let for each $\notinconvexhullidx \in \notinconvexhullsym$ the distribution $\closestconditional[\notinconvexhullidx]$ be the $\chi^2$-closest of $\pmfcondx[\notinconvexhullidx]$ to the convex hull $\hull{\{\pmfcondx[r]\}_{r \in \centers}}$, then we have    
    \begin{align}
        C&\left(\{\pmfcondx[x]\}_{x \in \symtx} \right) - C\left(\{\pmfcondx[r]\}_{r\in \centers} \right) \nonumber \\
        &\leq \begin{aligned}[t] &C\left(\{\pmfcondx[r]\}_{r\in \centers} \cup \{\pmfcondx[\notinconvexhullidx]\}_{\notinconvexhullidx \in \notinconvexhullsym}\right) \nonumber \\
        &- C(\{\pmfcondx[r]\}_{r\in \centers} \cup \{\closestconditional\}_{\notinconvexhullidx \in \notinconvexhullsym}). \nonumber \end{aligned}
\end{align}
\end{lemma}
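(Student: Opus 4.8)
The plan is to derive the right-hand side from the left-hand side by two successive applications of \cref{thm:inconvexhull}. Recall that the removed symbols $\symtx \setminus \centers$ split into those whose conditional distribution lies in $\hull{\prunedchannel}$, collected in $\inconvexhullsym$, and those lying outside, collected in $\notinconvexhullsym$, so that $\symtx$ is the disjoint union of $\centers$, $\inconvexhullsym$ and $\notinconvexhullsym$. I would introduce two auxiliary channels: $\widetilde{W} \define \{\pmfcondx[r]\}_{r\in\centers} \cup \{\pmfcondx[\notinconvexhullidx]\}_{\notinconvexhullidx\in\notinconvexhullsym}$, obtained from the full channel by dropping the symbols of $\inconvexhullsym$, and $\prunedchannel^\prime \define \{\pmfcondx[r]\}_{r\in\centers} \cup \{\closestconditional[\notinconvexhullidx]\}_{\notinconvexhullidx\in\notinconvexhullsym}$, obtained from $\prunedchannel$ by adjoining the hull-closest surrogates. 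With this notation the asserted bound reads $C(\{\pmfcondx[x]\}_{x\in\symtx}) - C(\prunedchannel) \leq C(\widetilde{W}) - C(\prunedchannel^\prime)$.

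First I would show $C(\{\pmfcondx[x]\}_{x\in\symtx}) = C(\widetilde{W})$. The full channel differs from $\widetilde{W}$ only through the extra input symbols of $\inconvexhullsym$, and by definition $\pmfcondx[\inconvexhullidx] \in \hull{\prunedchannel} \subseteq \hull{\widetilde{W}}$ for every $\inconvexhullidx \in \inconvexhullsym$. Hence \cref{thm:inconvexhull}, with the roles of its two sets played by $\inconvexhullsym$ (removed) and $\centers \cup \notinconvexhullsym$ (kept), applies and gives the identity. Next I would show $C(\prunedchannel^\prime) = C(\prunedchannel)$: the channel $\prunedchannel^\prime$ differs from $\prunedchannel$ only through the adjoined conditional distributions $\closestconditional[\notinconvexhullidx]$, each of which is, by the very definition of $\hulldev{\centers}{\notinconvexhullidx}$, a point of $\hull{\prunedchannel}$, i.e., a convex combination of $\{\pmfcondx[r]\}_{r\in\centers}$. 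A second application of \cref{thm:inconvexhull}, now with its sets played by $\{\closestconditional[\notinconvexhullidx]\}_{\notinconvexhullidx\in\notinconvexhullsym}$ (removed) and $\centers$ (kept), yields that identity. Subtracting the two gives $C(\{\pmfcondx[x]\}_{x\in\symtx}) - C(\prunedchannel) = C(\widetilde{W}) - C(\prunedchannel^\prime)$, which is in fact stronger than the claimed inequality.

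The argument carries essentially no obstacle; the only points deserving a line of care are that the support hypothesis guaranteeing the minimizer $\closestconditional[\notinconvexhullidx]$ is well-defined (inherited from the definition of the distance to the convex hull) is in force, and that if some $\closestconditional[\notinconvexhullidx]$ happens to equal a vertex $\pmfcondx[r]$ or another adjoined distribution, then $\prunedchannel^\prime$ merely repeats an input symbol, which does not change any capacity. It is worth recording the complementary observation that, since $\widetilde{W}$ is a subchannel of the full channel and $\prunedchannel$ a subchannel of $\prunedchannel^\prime$, one always has $C(\widetilde{W}) \leq C(\{\pmfcondx[x]\}_{x\in\symtx})$ and $C(\prunedchannel) \leq C(\prunedchannel^\prime)$; consequently the inequality in the lemma can hold only with equality, so establishing the two identities above is not merely sufficient but necessary.
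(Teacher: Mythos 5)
Your proposal is correct and matches the paper's proof: the authors likewise telescope the capacity difference and cancel the two outer terms by applying \cref{thm:inconvexhull} once to drop the symbols of $\inconvexhullsym$ (contained in $\hull{\prunedchannel}$) and once to drop the adjoined distributions $\closestconditional[\notinconvexhullidx] \in \hull{\prunedchannel}$. Your closing observation that the stated inequality is in fact an equality is also consistent with their argument, which only writes ``$\leq$'' for presentation.
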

\begin{proof}
    We prove this lemma in \cref{app:proof_pseudo_capacity_diff_hull}.
\end{proof}
By applying the above lemma, bounding the capacity loss based on the $\chi^2$ distance of each symbol outside the convex hull to its nearest neighbor in $\centers$ is equivalent to considering the distance to the convex hull, since we can artificially transform the channel to contain all the points that minimize the distance to the convex hull to nearest neighbors that are actually part of the channel.

\begin{proof}[Proof of \cref{thm:closetohull}]
Let $\pseudocaod$ be the unique minimizer of
\begin{align*}
    \min_{Q_\rvY \in \mathcal{P}_\eta(\mathcal{Y})} \max_{r \in \centers} \kl{\pmfcondx[r]}{Q_\rvY}.
\end{align*}

Then, using triangle inequality, \cref{lemma:pseudo_capacity_diff}, and results from \cite[Lemma 1]{egger2023maximal}, we can write for any $x^\star \in \symtx$ that maximizes $\kl{\pmfcondx[x]}{\pseudocaod(\prunedchannel)}$ and the distance to its nearest neighbor, which at the same time is the distribution $\closestconditional[x^\star]$ on the convex hull closest to $\pmfcondx[x^\star]$ in terms of $\chi^2$-distance ($\chisq{\pmfcondx[x^\star]}{\pmfcondx[{\tmprep[x^\star]}]}$, that %
\begin{align*}
    &\min_{Q_\rvY \in \mathcal{P}(\mathcal{Y})} \max_{x \in \symtx} \kl{\pmfcondx[x]}{Q_\rvY} - \max_{r \in \centers} \kl{\pmfcondx[r]}{Q_{Y}^\star} \\
    &\leq \!\!\! \begin{aligned}[t] &\bigg| \! \min_{Q_\rvY } \! \max_{x \in \symtx} \kl{\pmfcondx[x]}{Q_\rvY} \! - \!\!\!\!\!\!\!\! \min_{Q_\rvY \in \mathcal{P}_\eta(\mathcal{Y})} \! \max_{x \in \symtx} \kl{\pmfcondx[x]}{Q_\rvY} \!\! \bigg| \\
    &\!\!\!\!\!\!\!\! +\bigg| \! \min_{Q_\rvY \in \mathcal{P}_\eta(\mathcal{Y})} \!\max_{x \in \symtx} \kl{\pmfcondx[x]}{Q_\rvY} \! - \! \max_{r \in \centers} \kl{\pmfcondx[r]}{\pseudocaod} \!\! \bigg| \\
    &+ \bigg| \max_{r \in \centers} \kl{\pmfcondx[r]}{\pseudocaod} - \max_{r \in \centers} \kl{\pmfcondx[r]}{Q_\rvY^\star} \bigg| \end{aligned} \\
    &\overset{(a)}{\leq} \!\! 4\card{\mathcal{Y}} \eta + \!\!\!\!\!\!\! \min_{Q_\rvY \in \mathcal{P}_\eta(\mathcal{Y})} \max_{x \in \symtx} \kl{\pmfcondx[x]}{Q_\rvY} \! \\
        & \qquad\qquad\qquad -\max_{r \in \centers} \kl{\pmfcondx[r]}{\pseudocaod} \\
    &\overset{(b)}{\leq} \begin{aligned}[t] &4\card{\mathcal{Y}} \eta + \chisq{\pmfcondx[x^\star]}{\pmfcondx[\tmprep]} \\
    &+ \sqrt{C_\eta(\prunedchannel) \log\left(\frac{\lbp}{\eta}\right) + \log\left(\frac{1}{\eta} \right) \log\left(\frac{1}{\lbp} \right)} \\
    &\cdot \sqrt{\chisq{\pmfcondx[x^\star]}{\pmfcondx[\tmprep]}} \end{aligned} \\
    &\overset{(c)}{\leq} \begin{aligned}[t] &4\card{\mathcal{Y}} \eta + \chisq{\pmfcondx[x^\star]}{\pmfcondx[\tmprep]} \\
    &+ \sqrt{\log\left(\frac{1}{\eta}\right) \left( C(\prunedchannel) + \log\left(\frac{1}{\lbp} \right)\right)} \\
    &\cdot \sqrt{\chisq{\pmfcondx[x^\star]}{\pmfcondx[\tmprep]}}, \end{aligned}
\end{align*}
where $(a)$ follows from applying \cite[Lemma 1]{egger2024maximal} twice; $(b)$ follows from \cref{lemma:pseudo_capacity_diff}; and $(c)$ is by bounding $C_\eta(\prunedchannel) \leq C(\prunedchannel)$ and from $\log\left(\frac{\kappa}{\eta}\right) \leq \log\left(\frac{1}{\eta}\right)$.

This emits a bias-variance-type trade-off based on the parameter $\eta$, and it remains to appropriately choose $\eta$. The function 
\begin{align*}
    \sqrt{\log\left(\frac{1}{\eta}\right) \left( C(\prunedchannel) + \log\left(\frac{1}{\lbp} \right)\right)}
\end{align*}
is strictly decreasing in $\eta$, with a sharp decrease around $0$. With constants $\tradeconsta \define 4\card{\symrx}$ and $\tradeconstb \define \sqrt{C(\prunedchannel) - \log \left(\lbp \right)}$, we find a good $\eta$ by solving
\begin{align*}
    \eta^\star &= \argmin_{\eta} \, \tradeconsta \eta + \tradeconstb \sqrt{\log\left(\frac{1}{\eta}\right)}.
\end{align*}
Therefore, we analyze the derivatives w.r.t. $\eta$:
\begin{align*}
    \frac{\partial \sqrt{\log\left(\frac{1}{\eta}\right)}}{\partial \eta} %
    = -\frac{1}{2\eta \sqrt{\log\left(\frac{1}{\eta}\right)}} \approx -\frac{1}{\sqrt{\eta} - 0.07},
\end{align*}
where the approximation is valid in the regime of interest ($\eta \ll 1$). Hence, we can choose $\eta$ to equalize the derivatives of the above and the linear dependency $4\eta\card{\symrx}$ on $\eta$. We have
\begin{align*}
    &\tradeconsta = \tradeconstb \cdot \sqrt{\chisq{\pmfcondx[x^\star]}{\pmfcondx[\tmprep]}} \cdot \frac{1}{\sqrt{\eta}-0.07} \Leftrightarrow \\
    & \eta = \left(\frac{\tradeconstb \cdot \sqrt{\chisq{\pmfcondx[x^\star]}{\pmfcondx[\tmprep]}}}{\tradeconsta} + 0.07\right)^2 %
\end{align*}

However, the value for $\eta$ is conditioned on the choice of the symbol $x^\star$, which is supposed to be the symbol that maximizes $\kl{\pmfcondx[x]}{\pseudocaod(\prunedchannel)}$. To determine this symbol, we must fix $\eta$ on the other hand. Therefore, we use the fact that the bound holds uniformly for all $\eta$. Hence, we choose for the calculation of $\eta$ the symbol that maximizes the difference to the capacity achieving output distribution $\kl{\pmfcondx[x]}{Q_\rvY^\star(\prunedchannel)}$, which is expected to be close to $\kl{\pmfcondx[x]}{\pseudocaod(\prunedchannel)}$ and consequently a good choice for computing $\eta$.
This concludes the proof of the theorem.
\end{proof}

\subsection{Proof of \cref{lemma:pseudo_capacity_diff_hull}} \label{app:proof_pseudo_capacity_diff_hull}

\begin{proof}
    Let $\inconvexhullsym \subset \symtx \setminus \centers$ be the symbols whose conditionals $\pmfcondx[\inconvexhullidx], \inconvexhullidx \in \inconvexhullsym$ are contained in the convex hull of the conditionals of symbols in $\centers$. Let for each symbol $\notinconvexhullidx \in \notinconvexhullsym \subset \symtx \setminus \centers$ whose conditional $\pmfcondx[\inconvexhullidx]$ is not contained in the convex hull of symbols in $\centers$ the distribution $\closestconditional$ in the convex hull be closest to $\pmfcondx[\notinconvexhullidx]$ in terms of $\chi^2$ distance, i.e.,
    \begin{align*}
        \closestconditional = \argmin_{\closestconditionalvar \in \hull{\{\pmfcondx[r]\}_{r \in \centers}}} \chisq{\pmfcondx[\notinconvexhullidx]}{\closestconditionalvar}.
    \end{align*}
    Since symbols are either in the convex hull or not, all sets are disjoint and we have that $\symtx = \centers \cup \inconvexhullsym \cup \notinconvexhullsym$. Consider a set of conditional probabilities containing those given by symbols in $\centers$ and $\inconvexhullsym$ and those given by distributions in the convex hull of $\centers$ closest to all symbols in $\notinconvexhullsym$. Then we can bound the capacity difference as follows:
    \begin{align}
        & C\left(\{\pmfcondx[x]\}_{x \in \symtx} \right) - C\left(\{\pmfcondx[r]\}_{r\in \centers} \right) \nonumber \\
        &= C\left(\{\pmfcondx[r]\}_{r\in \centers} \! \cup \! \{\pmfcondx[\notinconvexhullidx]\}_{\notinconvexhullidx \in \notinconvexhullsym} \! \cup \! \{\pmfcondx[\inconvexhullidx]\}_{\inconvexhullidx \in \inconvexhullsym}\right) \label{eqline:ineq1a} \\
        &- C\left(\{\pmfcondx[r]\}_{r\in \centers} \cup \{\pmfcondx[\notinconvexhullidx]\}_{\notinconvexhullidx \in \notinconvexhullsym} \right) \label{eqline:ineq1} \\
        &+ C\left(\{\pmfcondx[r]\}_{r\in \centers} \cup \{\pmfcondx[\notinconvexhullidx]\}_{\notinconvexhullidx \in \notinconvexhullsym} \right) \nonumber %
        \\
        &- C(\{\pmfcondx[r]\}_{r\in \centers} \cup \{\closestconditional\}_{\notinconvexhullidx \in \notinconvexhullsym}) \nonumber \\
        &+ \! C(\{\pmfcondx[r]\}_{r\in \centers} \! \cup \! \{\closestconditional\}_{\notinconvexhullidx \in \notinconvexhullsym}) \! - \! C\left(\!\{\pmfcondx[r]\}_{r\in \centers}\! \right) \label{eqline:ineq4} \\
        &\leq C\left(\{\pmfcondx[r]\}_{r\in \centers} \cup \{\pmfcondx[\notinconvexhullidx]\}_{\notinconvexhullidx \in \notinconvexhullsym}\right) \nonumber \\
        &- C(\{\pmfcondx[r]\}_{r\in \centers} \cup \{\closestconditional\}_{\notinconvexhullidx \in \notinconvexhullsym}), \nonumber
    \end{align}
    where \eqref{eqline:ineq1a}+\eqref{eqline:ineq1} and \eqref{eqline:ineq4} are $0$ by the application of \cref{thm:inconvexhull}. This proves the equivalence of comparing the distance of removed symbols to either the nearest neighbor, or the closest distribution in the convex hull. %
\end{proof}

\end{document}